\documentclass[
  hidelinks,
  onefignum,
  onetabnum
]{siamart251216}

\usepackage{mathtools}
\usepackage{amssymb}
\usepackage[english]{babel}
\usepackage{bm}
\usepackage{dsfont}
\usepackage{tikz}
\newsiamthm{thm}{Theorem}
\newsiamthm{lem}{Lemma}
\newsiamthm{prop}{Property}
\newsiamthm{prob}{Problem}
\newsiamthm{Corollary}{Corollary}

\newsiamremark{ass}{Assumption}
\newsiamremark{rem}{Remark}
\usepackage{epsfig}

\input{preamble}

\allowdisplaybreaks

\headers{Nonadaptive Learning in Robust Nonlinear Output Regulation}
{S. Wang, M. Guay, and R. D. Braatz}

\begin{document}
\begin{sloppypar}
%% ============================================================
%% Title and author information
%% ============================================================
\title{
Nonadaptive Learning in Robust Nonlinear Output Regulation
\thanks{
This research was supported by the U.S. Food and Drug Administration
under the FDA BAA-22-00123 program, Award Number 75F40122C00200.
}
}

\author{
Shimin Wang
\thanks{
School of Data Science, Lingnan University, Hong Kong,
and Massachusetts Institute of Technology, Cambridge, MA 02139, USA.
Corresponding author.}
\and
Martin Guay
\thanks{
Department of Chemical Engineering, Queen's University,
Kingston, ON K7L 3N6, Canada
(\email{martin.guay@queensu.ca}).
}
\and
Richard D. Braatz
\thanks{
Massachusetts Institute of Technology, Cambridge, MA 02139, USA
(\email{braatz@mit.edu}).
}
}

\maketitle

%% ============================================================
%% Abstract
%% ============================================================
\begin{abstract}
This paper considers robust nonadaptive regulation for general
nonlinear systems in an output-feedback setting with arbitrarily high
relative degree. We develop a nonadaptive design that combines an
input-driven filter and a generic internal model with a recursive
backstepping law, thereby recasting the regulation problem as the robust
input-to-state stabilization of an augmented error system. Unlike
adaptive schemes, the proposed method does not rely on linearly
parameterized regressors and does not require the construction of
Lyapunov functions having merely nonpositive derivatives. Under standard
assumptions on the exosystem, including purely imaginary and simple
eigenvalues, together with a minimum-phase input-to-state stability
condition on the internal dynamics, we establish global asymptotic
regulation and derive explicit, verifiable inequalities for selecting
the design gains. The resulting nonadaptive framework guarantees
convergence of the estimation and tracking errors even when the
controlled-system dynamics are complex or only partially known. The
effectiveness of the theoretical results is demonstrated using a
benchmark controlled Duffing system.
\end{abstract}

%% ============================================================
%% Keywords and MSC classifications
%% ============================================================
\begin{keywords}
nonlinear output regulation, nonadaptive learning, robust control,
output feedback, internal model, data-driven control
\end{keywords}

\begin{AMS}
93C10, 93B52, 93D15, 93C40
\end{AMS}

\section{Introduction}
Output regulation is an essential issue in the design of control systems \cite{isidori1990output,sutton2002reinforcement,serrani2000global}. 
It aims to have a system track a class of desired signals while rejecting the external disturbance \cite{marconi2008uniform,isidori1990output,huang2004nonlinear}. %duan2021highVI
In the linear case, regulation reduces to a pole-assignment problem because the steady-state tracking error depends linearly on the exogenous signals \cite{francis1976internal}. 
For nonlinear plants, however, the steady-state error is a nonlinear function of the exogenous signals \cite{huang1994robust,isidori2003robust}, and linear feedforward or linear internal-model designs fail in the presence of nonlinearities or parametric uncertainties.
%Accordingly, the desired signals and external disturbance can be lumped together as exogenous signals that are generated by an autonomous differential equation called the exosystem. 
%
%Various formulations of the output regulation problems have been investigated in the past decade, such as linear systems in \cite{francis1976internal} and nonlinear systems in \cite{marconi2008uniform} and \cite{huang2004nonlinear} with or without uncertainties in the exosystem. 
%
%
%Feedforward and feedback control are widely employed generic schematics for addressing output regulation problems.
%
%In terms of feedforward control, \cite{isidori1990output} showed that the output regulation of nonlinear systems can be solved by a feedforward control synthesized from specific solvable nonlinear partial differential equations called nonlinear regulator equations.
%
%The solvability of the nonlinear output regulation using feedforward control strictly relies on the perfect knowledge of the plant and the exosystem dynamics in the absence of uncertainties. 
%
%{\myr }

To address general nonlinear output regulation problems, various internal-model structures have been developed over the past decades. Canonical linear internal models have been effectively used with adaptive methods to handle uncertain linear exosystems \cite{nikiforov1998adaptive,serrani2001semi,huang2004nonlinear,marino2003output} and more recently for disturbance rejection in Euler–Lagrange systems \cite{wang2022leaderless}. 
For nonlinear exosystems generating non-sinusoidal signals without uncertainty, nonlinear internal models were introduced in \cite{byrnes2003limit} and \cite{chen2015stabilization}. 
A significant milestone is the generic internal model proposed in \cite{marconi2008uniform}, which removes structural assumptions on the steady-state input and accommodates both minimum-phase and non-minimum-phase systems. 
A broader overview can be found in \cite{huang2018internal}.

Adaptive internal-model-based methods provide a mechanism to address parametric uncertainties but suffer from structural limitations. 
Specifically, they require Lyapunov functions with non-positive derivatives and rely on explicit regressors determined by the system and exosystem structure \cite{forte2013robust,liu2009parameter}. 
Such designs offer weaker robustness, as highlighted by counterexamples demonstrating that boundedness may fail even under small external inputs \cite{chen2023lasalle}, and they apply primarily to systems with parametric uncertainty in a suitable regression form. 
These challenges restrict the applicability of adaptive approaches in general nonlinear output regulation.

Nonadaptive methods alleviate these difficulties by avoiding explicit parameter adaptation \cite{isidori2012robust,marconi2007output}, but the generic internal-model approach still depends on an explicit nonlinear continuous mapping that characterizes the steady-state input.
This mapping is assumed to exist \cite{kreisselmeier2003nonlinear} but is generally unknown, and no general analytic expression is available.%, as emphasized in \cite{bin2024robust}.
Existing solutions often rely on numerical least-squares techniques to approximate this function \cite{marconi2008uniform,marconi2007output}, which introduces additional tuning requirements and may limit robustness or generality.
Recent work \cite{wang2023nonparametric} addresses
this issue for nonlinear output regulation problems of relative degree one by constructing the
steady-state mapping nonparametrically, without requiring parametric regressors or restrictive
exosystem assumptions. 

%This development broadens the applicability of generic internal-model techniques and provides a foundation for the nonadaptive framework adopted in this article.

%

Motivated by these developments, the present article addresses the nonlinear robust output
regulation problem for general nonlinear output-feedback systems with arbitrary relative degree by employing nonadaptive learning methods. 
This setting is considerably more challenging
than the relative-degree-one case treated in \cite{wang2023nonparametric}, as the controller no longer has access to derivatives of the regulated output and the internal model and stabilization structures must be redesigned accordingly. 
The proposed approach
differs from existing techniques based on adaptive control \cite{tomei2023adaptive,liu2009parameter,
liuzzo2007adaptive,ding2003global}, which typically require explicit regressors, parameter
adaptation laws, and Lyapunov functions with non-positive derivatives, whose applicability is often restricted to parametric uncertainty structures.
Output regulation for uncertain nonlinear systems with higher relative degree remains an active research topic \cite{tomei2023adaptive,
dimanidis2020output}, with recent work focusing on low-complexity, approximation-free output-feedback
designs capable of achieving prescribed transient and steady-state performance \cite{dimanidis2020output}.
In contrast, the nonadaptive learning framework developed in this article addresses the global robust
output regulation problem through the construction of a linear generic internal model based on the
existence of a continuous nonlinear steady-state mapping.
This construction transforms the nonlinear
robust output regulation problem into a robust nonadaptive stabilization problem for an augmented
system with Input-to-State Stable (ISS) dynamics.
%
%Furthermore, integrating a nonparametric learning framework ensures the viability of the nonlinear mapping, demonstrating its capability to capture intricate and nonlinear relationships without the need for the existence of a predefined model of the steady-state input behaviour. 
%

The rest of this paper is organized as follows.  Section \ref{section2} introduces some standard assumptions and lemmas. Section \ref{mainresults} is devoted to the presentation of the main results. This is followed by simulation examples in Section \ref{numerexam} and brief conclusions in Section~\ref{conlu}.

\textbf{Notation:} $\|\cdot\|$ is the Euclidean norm. $\emph{Id} : \mathds{R}\rightarrow \mathds{R}$ is an identity function. For $X_i\in \mathds{R}^{n_i\times m}$ with $i=1,\dots,N$, let $\col(X_1,\dots,X_N)=[X_{1}^{\!\top},\dots ,X^{\!\top}_{N}]^{\!\top}$ and $$\textnormal{diag}(X_1,\dots,X_N)=\left[\begin{matrix}X_1& & \\
& \ddots & \\
& & X_N\end{matrix}\right].$$
A function $\alpha: \mathds{R}_{\geq 0}\rightarrow \mathds{R}_{\geq 0}$ is of class $\mathcal{K}$ if it is continuous, positive definite, and strictly increasing. The notation $\mathcal{K}_{\infty}$ identifies the subclasses of unbounded $\mathcal{K}$ functions. For functions $f_1(\cdot)$ and $f_2(\cdot)$ with compatible dimensions, their composition $f_{1}(f_2(\cdot))$ is denoted by $f_1\circ f_2(\cdot)$. For a matrix $X$,  $\textnormal{\mbox{Adj}}[X]$ denotes its adjugate matrix.

\section{Problem Formulation and Assumptions}\label{section2}

We consider a class of nonlinear control systems of the form:
\begin{subequations}\label{second-nonlinear-systems}
\begin{align}
\dot{z}&=f(z,y,v,w),\label{second-nonlinear-systems-a}\\
\dot{x}&= A_cx+g(z,y,v,w)+B_cbu,\\
 y &=C_cx, \\
  e &= y - h(v,w),
 \end{align}
\end{subequations}
where $(z, x)\in \mathds{R}^{n_z}\times \mathds{R}^{r}$ is the vector of state variables with $r\geq 1$ with a $z$-subsystem of fully nonlinear dynamics and a $x$-subsystem of partially structured linear dynamics with an additional nonlinear term  $g(z,y,v,w)=\col(g_1(z,y,v,w),\dots, g_r(z,y,v,w))$, $y\in \mathds{R}$ is the output of the system, $h(v,w)$is explicitly the \emph{reference output} generated by the
exosystem \eqref{eqn: exosystem system}, $ e\in \mathds{R}$ is the tracking error of the system, $u\in \mathds{R}$ is the input, $w\in \mathds{W}\subset \mathds{R}^{n_w}$ is an uncertain parameter vector with $\mathds{W}$ being an arbitrarily prescribed subset of $\mathds{R}^{n_w}$ containing the origin, $b$ is a positive constant, the functions $h(\cdot)$, $f(\cdot)$, $g_i(\cdot)$ are globally defined and sufficiently smooth and satisfy $f(0,0,0,w)=0$, and $g_i(0,0,0,w)=0$ for all $w\in \mathds{W}$, and $v(t)\in \mathds{R}^{n_v}$ is an exogenous signal representing the reference input and disturbance, which is generated by the exosystem
\begin{align}\label{eqn: exosystem system}
\dot{v}=&S(\sigma)v,
\end{align}
where $\sigma \in \mathds{S}\subset \mathds{R}^{n_\sigma}$ represents the uncertainties in the exosystem with $S(\sigma)$ being a constant matrix. The matrices $A_c\in \mathds{R}^{r\times r}$, $C_c\in \mathds{R}^{1\times r}$, and $B_c\in \mathds{R}^r$ have the form
\begin{align*}
    A_c=&\left[\begin{matrix}\textbf{0}& I_{r-1}\\
    0& \textbf{0}
    \end{matrix}\right]\!, \ \  C_c^{\!\top}=\col(1, \textbf{0}_{r-1}), \ \ B_c=\col(\textbf{0}_{r-1},1),
\end{align*}
where $I_{r-1}$ and $\textbf{0}_{r-1}=\col(0,\dots,0)$ are the identity matrix and zero vector of $r-1$ dimension, respectively.

The nonlinear robust output regulation problem in this article is formulated below
\begin{prob} \label{Prob: second-order-Output-regulation}
Given the nonlinear system \eqref{second-nonlinear-systems}--\eqref{eqn: exosystem system} and any compact subsets $\mathds{S}\in \mathds{R}^{n_{\sigma}}$, $\mathds{W}\in \mathds{R}^{n_w}$, and $\mathds{V}\in \mathds{R}^{n_v}$ with $\mathds{W}$ and $\mathds{V}$ containing the origin, design a control law 
such that for all initial conditions $v(0)\in \mathds{V}$, $\sigma \in \mathds{S}$ and $w\in \mathds{W}$, and any initial states $\textnormal{\col}(z(0), x(0))\in\mathds{R}^{n_z+r}$, the solution of the closed-loop system exists and is bounded for all $t\geq 0$,
 and $\lim\limits_{t\rightarrow\infty}e(t)=0$.
\end{prob}
Before proceeding with the main results, we state the assumptions.
\begin{ass}\label{ass0} For all $\sigma$, all the eigenvalues of $S(\sigma)$ are simple
with zero real parts.
\end{ass}
\begin{ass}\label{ass1}  There exists a globally defined smooth function $\bm{z}(v,w,\sigma):  \mathds{R}^{n_v}\times \mathds{R}^{n_w}\times \mathds{R}^{n_\sigma} \longmapsto \mathds{R}^{n_z}$ such that
\begin{align}\label{regulator-1}
\frac{\partial \bm{z}(v,w,\sigma)}{\partial v} S(\sigma)v=f(\bm{z}(v,w,\sigma), h(v,w), v, w)
\end{align}
for all $(v, w, \sigma)\in \mathds{V}\times\mathds{W} \times\mathds{S}$ with $\bm{z}(0,w,\sigma)=0$.
\end{ass}
\begin{rem} The term $ \bm{z}(v,w,\sigma)$ denotes the steady state of the $z$-subsystem \eqref{second-nonlinear-systems-a}, which is the solution to the associated regulator equation \eqref{regulator-1}. Assumption \ref{ass0} is a standard assumption appearing in \cite{isidori2012robust,huang2004general,serrani2001semi,wang2023nonparametric} which limits the exogenous signal $v$ generated in \eqref{eqn: exosystem system} to be arbitrarily large constant signals and multi-tone sinusoidal signals with arbitrarily unknown initial phases, amplitudes and arbitrarily known frequencies. 
\end{rem}
\begin{ass}[Minimum-phase condition]\label{H2}  The translated inverse system 
\begin{align}\label{Reg-2}
\dot{\bar{z}} &={f(\bar{z}+\bm{z}(\mu),e+h(v,w),v, w)-f(\bm{z}(\mu), h(v,w),v, w) }%_{\bar{f}(\bar{z},e,\mu)}
\end{align}
is input-to-state stable with state $\bar{z}=z-\bm{z}(\mu)$, $\mu=\textnormal{\col}(v,w,\sigma)$ and input $e$ in the sense of \cite{Sontag2019}. In particular,  
there exists a continuous function $V_{\bar{z}}(\bar{z})$ satisfying 
$$\underline{\alpha}_{\bar{z}}(\|\bar{z} \|)\leq V_{\bar{z}}(\bar{z}) \leq \overline{\alpha}_{\bar{z}}(\|\bar{z} \|)$$
for some class $\mathcal{K}_{\infty}$ functions $\underline{\alpha}_{\bar{z}}(\cdot)$ and $\overline{\alpha}_{\bar{z}}(\cdot)$ such that, for any $v\in \mathds{V}$, along the trajectories of the $\bar{z} $ subsystem,
$$\dot{V}_{\bar{z} }\leq-\alpha_{\bar{z}}(\|\bar{z} \|)+ \gamma(e),$$
where $\alpha_{\bar{z}}(\cdot)$ is some known class $\mathcal{K}_{\infty}$ function satisfying $\limsup\limits_{\varsigma\rightarrow 0^{+}}\left(\alpha_{\bar{z}}^{-1}(\varsigma^2)/\varsigma\right)\!< + \infty$, and $\gamma(\cdot)$ is some known smooth positive definite function.
\end{ass}
\begin{rem}\label{rem-barV} Assumption \ref{H2} guarantees that the $\bar{z}$-system \eqref{Reg-2} is input-to-state stable with state $\bar{z}$ and input $e$. In addition, by the changing supply function technique \cite{sontag1995changing}, there exists a continuous function $\bar{V}_{\bar{z}}(\bar{z})$ satisfying 
$ \underline{\alpha}_{\bar{z}}(\|\bar{z}\|)\leq  \bar{V}_{\bar{z}}(\bar{z})\leq \bar{\alpha}_{\bar{z}}(\|\bar{z}\|)$, for some class $\mathcal{K}_{\infty}$ functions $\underline{\alpha}_{\bar{z}}(\cdot)$ and $\bar{\alpha}_{\bar{z}}(\cdot)$ such that  for any smooth function $\Delta_{\bar{z}}(\bar{z})>0$ and any $\mu \in \mathds{V}\times\mathds{W} \times\mathds{S}$, the time derivative of $ \bar{V}_{\bar{z}}(\bar{z})$ along the trajectory \eqref{Reg-2} satisfies the inequality 
$$\dot{\bar{V}}_{\bar{z}}(\bar{z})\leq -\Delta_{\bar{z}}(\bar{z})\|\bar{z}\|^2+\delta_{\bar{z}}\gamma_{\bar{z}}(e)e^2,$$
where $\delta_{\bar{z}}$ and $\gamma_{\bar{z}}(\cdot)$ are some positive constant and positive smooth function, respectively.
%This Assumption \ref{H2} be used for introducing Properties \label{property1} and \label{property2}.
\end{rem}
Under Assumption \ref{ass1}, there exists globally defined smooth functions $\{\bm{x}(v,w,\sigma), \bm{z}(v,w,\sigma), \bm{u}(v,w,\sigma)\}$ with $\bm{x}_1(v,w,\sigma)=h(v,w)$ satisfying 
\begin{align*}
\frac{\partial\bm{x}(\mu)}{\partial v}S(\sigma)v &=  A_c\bm{x}(\mu)+ g(\bm{z}(\mu),\bm{x}_1(\mu),\mu)+B_cb\bm{u}(\mu),\\
    \bm{u}(\mu)&= b^{-1}\!\times \left[\frac{\partial\bm{x}_r(\mu)}{\partial v}S(\sigma)v-g_r(\bm{z},\bm{x}_1(\mu),\mu)\right]\!,
\end{align*}
with $\bm{x}(\mu)=\col(\bm{x}_1(\mu),\dots, \bm{x}_r(\mu))$. 
For convenience, let $\bm{x}\equiv \bm{x}(\mu)$, $\bm{z}\equiv \bm{z}(\mu)$ and $\bm{u}\equiv \bm{u}(\mu)$. 

The control system \eqref{second-nonlinear-systems} has relative degree $r\geq 2$. Motivated by \cite{jiang2004unifying},  we define the input-driven filter 
\begin{align}\label{input-filter}
    \dot{\hat{x}}=A\hat{x}+B_cu,
\end{align}
where $\hat{x}\in \mathds{R}^{r}$ is an estimate of $x$ in system \eqref{second-nonlinear-systems}, and $A=A_c-\lambda C_c$ with $\lambda=\col(\lambda_1,\dots,\lambda_r)$ such that $A$ is Hurwitz. We perform the coordinate transformation $\tilde{x}_i=b^{-1}x_i-\hat{x}_i$, $i=1,\dots,r$, to obtain
\begin{subequations}\label{nonlinear-systems-trans}
\begin{align}
\dot{z}&=f(z,y,v,w),\\
\dot{\tilde{x}}&= A\tilde{x}+b^{-1}\times(\lambda y+g(z,y,v,w)),\\
 \dot{y} &=b\hat{x}_2+b\tilde{x}_2+g_1(z,y,v,w), \\
  \dot{\hat{x}}_i&= \hat{x}_{i+1}-\lambda_i\hat{x}_1,\;\;i=2,\dots,r-1,\;\;\\
  \dot{\hat{x}}_r&=u-\lambda_r\hat{x}_1,
 \end{align}
\end{subequations}
where $g(z,y,v,w)=\col(g_1(z,y,v,w),\dots, g_r(z,y,v,w))$ and $\tilde{x}=\col(\tilde{x}_1,\dots,\tilde{x}_r)$.

\begin{ass}\label{ass0i} The function $\bm{u}(v,\sigma,w)$ is a
polynomial in $v$ with coefficients depending on $w$ and $\sigma$.
\end{ass}
\begin{lem}\label{lem:steady_state_generator}
Suppose Assumption \ref{ass1} holds. If the function $\bm{u}(v,\sigma,w)$ is a polynomial in $v$ with coefficients depending on $w$ and $\sigma$, then there exist a dimension $n_{\tau}$, a vector of monomials $\tau_u(v) \in \mathds{R}^{n_{\tau}}$, and matrices $\Phi_u(\sigma) \in \mathds{R}^{n_{\tau}\times n_{\tau}}$ and $\Gamma_u(\sigma, w) \in \mathds{R}^{1\times n_{\tau}}$ such that:
\begin{align}\label{steady-tau-u}
    \frac{ \partial \tau_u(v)}{\partial v}S(\sigma)v &= \Phi_u(\sigma) \tau_u(v), \\
    \bm{u}(\mu) &= \Gamma_u(\mu) \tau_u(v), \nonumber
\end{align}
where all eigenvalues of $\Phi_u(\sigma)$ have zero real part. Furthermore, if the Hurwitz matrix $A$ shares no common eigenvalues with $\Phi_u(\sigma)$, there exists a unique matrix $P_u \in \mathds{R}^{r\times n_{\tau}}$ such that $\bm{\hat{x}}(\mu) \coloneqq P_u\tau_u(v)$ satisfies the regulator equation:
\begin{equation}\label{regulator-2}
    \dot{\bm{\hat{x}}}(\mu) = A \bm{\hat{x}}(\mu)+ B_c \textbf{u}(\mu).
\end{equation}
\end{lem}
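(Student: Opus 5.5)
The plan has two parts. First we build a finite-dimensional linear generator for the polynomial steady-state input. Then we solve a Sylvester equation for the filter's steady state.

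\emph{Generator.} By Assumption \ref{ass0i}, $\bm{u}(\mu)=\sum_{k=0}^{d}\Gamma_k(w,\sigma)\,v^{[k]}$ for some degree $d$. Here $v^{[k]}$ is the vector of all monomials of degree exactly $k$ in $v$. Set $\tau_u(v)=\col(v^{[1]},\dots,v^{[d]})$, adding the constant monomial if $\bm{u}$ has a constant term. Since $\dot v=S(\sigma)v$ is linear, the time derivative of a degree-$k$ monomial, namely $\frac{\partial v^{[k]}}{\partial v}S(\sigma)v$, is again a homogeneous polynomial of degree $k$. Hence there is a matrix $S^{[k]}(\sigma)$, the induced action of $S$ on the symmetric $k$-th power, with
$\frac{\partial v^{[k]}}{\partial v}S(\sigma)v=S^{[k]}(\sigma)v^{[k]}$.
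Taking $\Phi_u=\textnormal{diag}(S^{[1]},\dots,S^{[d]})$ and $\Gamma_u$ as the stacked coefficient rows gives \eqref{steady-tau-u}.

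\emph{Spectrum of $\Phi_u$.} This is the step that needs the most care. Diagonalize $S(\sigma)$ over $\mathbb{C}$, which Assumption \ref{ass0} allows since its eigenvalues are simple. In the eigen-coordinates $\xi=Tv$, every monomial $\xi^{\alpha}$ with $|\alpha|=k$ is an eigenvector of the induced action, with eigenvalue $\sum_i\alpha_i\lambda_i$. Because $v^{[k]}$ and the $\xi$-monomials of degree $k$ span the same space of homogeneous polynomials, $S^{[k]}$ is similar to this diagonal action. All eigenvalues of $\Phi_u$ are therefore nonnegative integer combinations of purely imaginary numbers, so they have zero real part; the constant monomial contributes the eigenvalue $0$. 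The only subtlety is that the $\xi$-monomials are complex. This is handled by working on the complexified space of homogeneous polynomials and using that the spectrum of a real matrix is invariant under complex similarity.

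\emph{Filter steady state.} Seek $\bm{\hat{x}}=P_u\tau_u(v)$. Its derivative along the exosystem is $P_u\Phi_u\tau_u$. Substituting into \eqref{regulator-2} together with $\bm{u}=\Gamma_u\tau_u$, it suffices to solve the Sylvester equation
$$P_u\Phi_u(\sigma)-AP_u=B_c\Gamma_u(\mu).$$
Since $A$ is Hurwitz and $\Phi_u$ has spectrum on the imaginary axis, $\sigma(A)\cap\sigma(\Phi_u)=\emptyset$. The linear map $P\mapsto P\Phi_u-AP$ has eigenvalues $\nu_j-a_i$, so it is nonsingular, and a unique $P_u$ exists. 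It is real because the data are real.

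Uniqueness of $P_u$ as a matrix in the stated sense is exactly uniqueness of this Sylvester solution. If uniqueness is read as uniqueness of $\bm{\hat{x}}$, one should also note that the components of $\tau_u$ are linearly independent monomials.
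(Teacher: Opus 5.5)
Your proposal is correct and follows the paper's proof: you build the monomial vector $\tau_u$ with a linear generator $\Phi_u$, then solve the Sylvester equation $P_u\Phi_u - AP_u = B_c\Gamma_u$, whose solution is unique because the spectra are disjoint. The only real difference is that you prove the imaginary-axis spectrum of $\Phi_u$ directly, via the eigenvalues $\sum_i\alpha_i\lambda_i$ in eigen-coordinates of $S(\sigma)$, and you handle a possible constant term and the reality of $P_u$; the paper instead delegates the spectral fact to Chapter 4 of Huang's monograph.
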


\begin{proof}
Since $\bm{u}(v,\sigma,w)$ is a polynomial in $v$, there exists an integer $n_u$ such that $\bm{u}$ can be expanded as:
\begin{align*}
\bm{u}(v,\sigma,w)=& \sum\limits_{l=1}^{n_u} U_{l}(\sigma,w) v^{[l]}\\
=&\ \underbrace{\left[\begin{matrix}U_{1}(\sigma,w)&\cdots &U_{n_u}(\sigma,w)\end{matrix}\right]}_{\Gamma_u(\sigma,w)} \col(v^{[1]},\dots, v^{[n_u]})
\end{align*}
where $\Gamma_u(\sigma,w)$ is a suitable coefficient matrix and $v^{[l]}$ represents the vector of monomials of degree $l$:
\begin{align*}
v^{[l]}=&\ \col(v_1^{l},v_1^{l-1}v_2,\dots,v_1^{l-1}v_{n_u}, \dots, v_{n_u}^{l}), \quad l= 1,\dots, n_u.
\end{align*}
Let $\tau_u(v)=\col(v^{[1]},\dots, v^{[n_u]})$. Following Chapter 4 of \cite{huang2004nonlinear}, the time derivative of the monomial vector satisfies linear dynamics driven by a matrix $\Phi_u(\sigma)$ with eigenvalues on the imaginary axis, proving \eqref{steady-tau-u}.

Since $A$ is Hurwitz and $\Phi_u(\sigma)$ has eigenvalues with zero real parts, they share no common eigenvalues. Thus, the generalized Sylvester equation:
\begin{equation*}
    P_u\Phi_u(\sigma)=A P_u+B_c \Gamma_u(\mu)
\end{equation*}
admits a unique solution $P_u$. Defining $\bm{\hat{x}}(\mu)=P_u\tau_u(v)$ and taking its time derivative along the trajectories of the exosystem yields:
\begin{align*}
    \dot{\bm{\hat{x}}}(\mu) &= P_u \frac{\partial \tau_u(v)}{\partial v} S(\sigma)v \\
    &= P_u\Phi_u(\sigma) \tau_u(v) \\
    &= (A P_u + B_c \Gamma_u) \tau_u(v) \\
    &= A \bm{\hat{x}}(\mu)+ B_c \textbf{u}(\mu).
\end{align*}
This completes the proof.
\end{proof}

Let $\bm{E}(\mu)=b^{-1}\bm{x}(\mu)-\bm{\hat{x}}(\mu)$; then the regulator equation solution associated with the composite systems \eqref{eqn: exosystem system} and \eqref{nonlinear-systems-trans} is
$$\{\bm{z}(\mu),\ \bm{E}(\mu),\ \bm{y}(\mu),\ \bm{\hat{x}}(\mu),\ \bm{u}(\mu)\}.$$

%Some more standard assumptions are below.
%\begin{ass}\label{ass0i} The function $\bm{\hat{x}}_2(v,\sigma,w)$ is
%polynomial in $v$ with coefficients depending on $w$ and $\sigma$.
%\end{ass}
% \begin{rem}\myr 
    
% \end{rem}
 \begin{rem}\label{remPE} The second element of the vector $\bm{\hat{x}}(\mu)=\textnormal{\col}(\bm{\hat{x}}_1(\mu),\dots, \bm{\hat{x}}_r(\mu))$ with $\mu=\textnormal{\col}(v,\sigma,w)$ is denoted by $\hat{\bm{x}}_2(\mu)$. Let $\bm{\hat{x}}$ be the steady state associated with \eqref{input-filter} and, $\bm{\hat{x}}(\mu)$, the solution to the regulator equation \eqref{regulator-2}.
From \cite{liu2009parameter}, under Assumptions \ref{ass0} and \ref{ass0i}, for the function $\bm{\hat{x}}_2(v,\sigma,w)$, there exist integers $n>0$ such that 
 $\bm{\hat{x}}_2(v,\sigma,w)$ can be expressed as
\begin{align}\label{remPE-trsin}
%\bm{x}_2(v(t),\sigma,w)=&\sum_{j=1}^{n_1^*}C_{1j}(v(0), w,\sigma)e^{\imath \hat{\omega}_{1j}t},\\
\bm{\hat{x}}_2(v(t),\sigma,w)=&\sum\nolimits_{j=1}^{n}C_{j}(v(0), w,\sigma)e^{\imath \hat{\omega}_{j}t}
\end{align}
for some functions $C_{j}(v(0), w,\sigma)\in \mathds{C} $, where $\imath$ is the imaginary unit and $\hat{\omega}_j$ are distinct real numbers for $1\leq j \leq n$. The minimal zeroing polynomial of $\bm{\hat{x}}_2(v(t),\sigma,w)$ is $\Pi_{j=1}^{n}(s+\imath \hat{\omega}_{j} )$.
\end{rem}
\begin{ass}\label{ass5-explicit}  
 The initial condition $v(0)\in \mathds{V}$ and any parameter vectors $w\in \mathds{W}$ and $\sigma\in \mathds{S}$ satisfy the coefficients satisfy $C_{j}(v(0), w,\sigma)\neq 0$ for all $1\leq j \leq n$.
\end{ass}

\begin{rem}[On Assumption~\ref{ass5-explicit}]
Assumption~\ref{ass5-explicit} imposes a nondegeneracy condition on the coefficients 
$C_j(v(0),w,\sigma)$ appearing in the steady-state generator. Its purpose is to ensure that the
nonlinear mapping used in the construction of the generic internal model is well defined and does 
not encounter singularities for any admissible exosystem initial condition or parameter values.
This prevents the generalized Sylvester-type equation from losing rank and guarantees the existence
and uniqueness of the steady-state input required by the internal model. Similar nondegeneracy 
assumptions appear in classical nonlinear output regulation theory, e.g., in the solvability 
conditions of the nonlinear regulator equations and steady-state mappings in 
\cite{isidori1990output,huang2004nonlinear} and \cite{marconi2008uniform}, as well as in our previous 
nonparametric framework for relative-degree-one systems \cite{wang2023nonparametric}. 
Assumption~\ref{ass5-explicit} should therefore be understood as a mild structural condition that 
avoids singularities rather than a stability or ISS requirement.
\end{rem}

\subsection{Generic internal model design}
Under Assumptions \ref{ass0} and \ref{ass0i}, there exists a positive integer $n$ such that $\bm{\hat{x}}_2(\mu)$ satisfies,  for all $\mu\in \mathds{V}\times\mathds{W}\times \mathds{S}$,
%\begin{subequations}
\begin{align} \label{aode-explicit}
%\frac{d^{n_1}\bm{x}_2(v,\sigma,w)}{dt^{n_2}}+a_{1}&(\sigma)\bm{x}_2(v,\sigma,w)+a_{2}(\sigma)\frac{d\bm{x}_2(v,\sigma,w)}{dt}\notag\\
%+&\dots+a_{n_1}(\sigma)\frac{d^{n_2-1}\bm{x}_2(v,\sigma,w)}{dt^{n_1-1}}=0,\\
\frac{d^{n}\bm{\hat{x}}_2(\mu)}{dt^{n}}&+a_{1}(\sigma)\bm{\hat{x}}_2(\mu)
+\dots+a_{n}(\sigma)\frac{d^{n-1}\bm{\hat{x}}_2(\mu)}{dt^{n-1}}=0,
\end{align}
%\end{subequations}
where $a_{1}(\sigma),\dots,a_{n}(\sigma)$ belong to $\mathds{R}$. Under Assumptions \ref{ass0} and \ref{ass0i}, equation \eqref{aode-explicit} yields the polynomial $$\varsigma^{n}+a_{1}(\sigma)+a_{2}(\sigma)\varsigma+\dots+a_{n}(\sigma)\varsigma^{n-1}$$ whose roots are distinct with zero real parts for all $\sigma\in \mathds{S}$. Let $a(\sigma)=\col(a_{1}(\sigma), \dots, a_{n}(\sigma))$, $\bm{\xi}(\mu)= \col\!\left(\bm{\hat{x}}_2(\mu),\frac{d\bm{\hat{x}}_2(\mu)}{dt},\dots,\frac{d^{n-1}\bm{\hat{x}}_2(\mu)}{dt^{n-1}}\right)\!$, and $\bm{\xi} \equiv \bm{\xi}(\mu)$, and define
\begin{align*}
  \Phi(a(\sigma)) =&\left[
                      \begin{array}{c|c}
                        \textbf{0}_{(n-1)\times 1} & I_{n-1} \\
                        \hline
                        -a_{1}(\sigma) &-a_{2}(\sigma),\dots,-a_{n}(\sigma) \\
                      \end{array}
                    \right], \\
  \Gamma =&\left[
                \begin{array}{cccc}
                  1 & 0 & \cdots &0 \\
                \end{array}
              \right]_{1\times n}.
\end{align*}
Then, $ \bm{\xi}\left(\mu\right)$, $\Phi(a(\sigma))$ and $\Gamma$ satisfy
\begin{subequations}\label{stagerator}
\begin{align}
\dot{\bm{\xi}}(\mu)&=\Phi (a(\sigma))  \bm{\xi}(\mu),\\
\bm{\hat{x}}_2(\mu)
   &= \Gamma \bm{\xi}(\mu).
\end{align}
\end{subequations}
System \eqref{stagerator} is called a steady-state generator with output $\hat{x}_2$ as it can be used to produce the steady-state signal $\bm{\hat{x}}_2$. Define the matrix pair $(M, N)$ by
\begin{subequations}\label{MNINter}\begin{align}
M=&\left[
                      \begin{array}{c|c}
                        \textbf{0}_{(2n-1)\times 1} & I_{2n-1} \\
                        \hline
                        -m_{1} &-m_{2},\dots,-m_{2n} \\
                      \end{array}
                    \right],\\
N=&\left[
                \begin{array}{cccc}
                  0 & \cdots & 0 &1 \\
                \end{array}
              \right]_{1\times 2n}^{\!\top},                   
\end{align}
\end{subequations}
where $m_{1}$, $m_{2}$, $\dots$, $m_{2n}$ are chosen such that $M$ is Hurwitz, together with all the eigenvalues of $\Phi(a)$ being distinct with zero real parts, which results in  
the nonsingular matrix-valued function
$$ \Xi(a) \equiv \Phi(a)^{2n}+\sum\nolimits_{j=1}^{2n}m_{j}\Phi(a)^{j-1} \in \mathds{R}^{n \times n }.$$ Then, using $\Xi(a)\Phi(a)=\Phi(a)\Xi(a)$ and $\col(\Gamma, \Gamma \Phi(a),\dots,  \Gamma\Phi(a)^{n-1} )=I_n$ gives that $\Phi(a)\Xi(a)^{-1}=\Xi(a)^{-1}\Phi(a)$ and
\begin{align}
&\ \col(Q_1(a),Q_2(a),\dots,Q_n(a))\nonumber\\
=&\ \col(\Gamma \Xi(a)^{-1},\Gamma \Xi(a)^{-1}\Phi(a),\dots, \Gamma \Xi(a)^{-1}\Phi(a)^{n-1})\nonumber\\
=&\ \col(\Gamma \Xi(a)^{-1},\Gamma \Phi(a)\Xi(a)^{-1},\dots, \Gamma \Phi(a)^{n-1}\Xi(a)^{-1})\nonumber\\
=&\ \underbrace{\col(\Gamma,\Gamma \Phi(a),\dots,  \Gamma \Phi(a)^{n-1})}_{I_n} \Xi(a)^{-1} \label{XIQA-explicit}
\end{align}
with  $Q_{j}(a)=\Gamma \Xi(a)^{-1}\Phi(a)^{j-1} \in \mathds{R}^{1\times n}$, $j=1,\dots, n$.
Define the Hankel real matrix \cite{afri2016state}: 
\begin{align}
\Theta (\theta)&\equiv\!\left[\begin{matrix}\theta_{1} &\theta_{2}&\cdots&\theta_{n}\\
\theta_{2}&\theta_{3}&\cdots&\theta_{n+1}\\
\vdots&\vdots&\ddots&\vdots\\
\theta_{n} &\theta_{n +1}&\cdots&\theta_{2n -1}
\end{matrix}\right]\!\in \mathds{R}^{n \times n},\nonumber
\end{align}
where $\theta= \textnormal{\col}(\theta_{1}, \theta_{2}, \dots, \theta_{2n})= Q \bm{\xi} $  with 
\begin{align}\label{Qdefini}
Q\equiv\textnormal{\col}(Q_{1},\dots,Q_{2n})\in \mathds{R}^{2n\times n},
\end{align}
and \begin{align*}Q_{j}(a)=\Gamma \Xi(a)^{-1}\Phi(a)^{j-1} \in \mathds{R}^{1\times n},& &1\leq j\leq 2n. \end{align*}
Under Assumptions \ref{ass0}, \ref{ass1} and \ref{ass0i}, the matrices $\Phi(a(\sigma))$, $Q$, $M$, $N$, and $\Gamma$ satisfy the matrix equation (see \cite{wang2023nonparametric}):
 \begin{align}
M Q &=Q \Phi(a(\sigma))-N\Gamma,\label{MNGAMMAPhi}
\end{align}   
which is called the \textit{Generalized Sylvester Matrix Equation}. The explicit solutions can be found in \cite{zhou2005explicit}.
%\begin{rem}\myr The application of equation \eqref{XIQA-explicit} and the \textit{Generalized Sylvester Matrix Equation} \eqref{MNGAMMAPhi} significantly differs from that of Lemma 3.1 and Theorem 3.1 in \cite{xu2019generic}, as %both Lemma 3.1 and Theorem 3.1 of \cite{xu2019generic} 
%the latter results require the matrix $\Xi(a)$ to have no zero eigenvalue and to be of even dimension.
%\end{rem}

%resulting some generic internal model variants
%Then, we define a dynamic compensator as follows:
%\begin{subequations}\begin{align}
%\dot{\eta}_{1}=&M_1\eta_1+N_1x_2,\\
%\dot{\eta}_{2}=&M_2\eta_2+N_2u,
%\end{align}\end{subequations}

As shown in \cite{kreisselmeier2003nonlinear,marconi2007output,marconi2008uniform} and \cite{wang2023nonparametric}, there exists a continuous nonlinear mapping $\chi(\cdot)$ such that  
\begin{align}\label{IM-00}
\bm{\eta}^{\star}(v(t),\sigma,w))
&=\int_{-\infty}^t e^{M(t-\tau)}N \bm{\hat{x}}_2 (v(\tau), \sigma, w) d\tau,\\
\bm{\hat{x}}_2(v(t), \sigma, w))
&=\chi(\bm{\eta}^{\star}(v(t),\sigma,w)), ~~ \bm{\eta}^{\star}\in\mathds{R}^{n_0},\notag
%\hat{\bm{z}}^{\star}(t)=&Q^{-1}(v(t))\equiv 
\end{align}
that satisfies the differential equations 
\begin{align}\label{IM-01}
\frac{d \bm{\eta}^{\star}(v(t),\sigma,w)}{dt}  &= M\bm{\eta}^{\star}(v(t),\sigma,w) + N\bm{\hat{x}}_2(v(t), \sigma, w), \notag\\
\bm{\hat{x}}_2(v(t), \sigma, w)) &= \chi(\bm{\eta}^{\star}(v(t),\sigma,w)) ,
\end{align}
namely, the steady-state generator of $\bm{\hat{x}}_2$ with sufficiently large dimension $n_0$ and some continuous mapping $\chi(\cdot)$. Under Assumptions \ref{ass0}, \ref{ass0i}, and \ref{ass5-explicit}, insertion of \textit{Generalized Sylvester Matrix Equation} \eqref{MNGAMMAPhi} into \eqref{IM-00} and rearranging gives that $\bm{\eta}^{\star}=\theta$ (see Lemma 3 in \cite{wang2023nonparametric}).
Then, system \eqref{IM-01} leads the internal model
%\begin{subequations}\label{explicit-mas}
\begin{align}
\dot{\eta}&=M\eta+N\hat{x}_2,\label{explicit-mas1}%\\
%\dot{\eta}_2&=M_2\eta_2+N_2u.\label{explicit-mas2}%\\
%\dot{k}_{i} &= \rho_{i}(e_{vi})e_{vi}^{2} \label{Nussam-mas2}\\
%\dot{\hat{a}}_i  &=- k_i \Theta ^T(\eta_i )\left[\Theta (\eta_i )\hat{a}_i +\textnormal{\col}(\eta_{i,n_i +1},\cdots,\eta_{i,2n_i})\right] \\
%u &= -\rho(e, x_2- \chi_{1}(\eta_1) + \chi_{2}(\eta_2) \label{explicit-mas3}
\end{align}
%\end{subequations}
which is the internal model associated with the signal $\hat{x}_2$. 
\subsection{Error dynamics}

Perform coordinate and input transformations on the composite systems \eqref{eqn: exosystem system}, \eqref{nonlinear-systems-trans}, and \eqref{explicit-mas1} to give
\begin{align*}
\bar{z}&= z-\bm{z},&  \bar{x}= \tilde{x}-\bm{E}, \\
\bar{\eta}&= \eta-\bm{\eta}^{\star}-Nb^{-1}e,& e=y-\bm{x}_1, 
\end{align*}
which yields an error system in the form:
\begin{subequations}\label{Main-sys1}\begin{align}
\dot{\bar{z}} %=& f(z,y,v,w)-f(\bm{z},h(v,w),v,w)\nonumber\\
%=&f(\bar{z}+\bm{z},e+h(v,w),v,w)-f(\bm{z},h(v,w),v,w)\nonumber\\
&= \bar{f}(\bar{z},e,\mu),\\
\dot{\bar{x}}%= &\dot{\bm{E}}-\dot{\tilde{x}}\nonumber\\
%= &b^{-1}\dot{\bm{x}}-\dot{\bm{\hat{x}}}-\dot{\tilde{x}}\nonumber\\
%= &b^{-1}A_c\bm{x}+ b^{-1}g(\bm{z},\bm{x}_1,\mu)+B_c\bm{u}\nonumber\\
%&- A\bm{\hat{x}}-B_c\bm{u}- A\tilde{x}-b^{-1}(\lambda y+g(z,y,v,w))\nonumber\\
%= &b^{-1}A\bm{x}+b^{-1}\lambda \bm{x}_1+ b^{-1}g(\bm{z},\bm{x}_1,\mu)\nonumber\\
%&- A\bm{\hat{x}}- A\tilde{x}-b^{-1}(\lambda y+g(z,y,v,w))\nonumber\\
%= &A\big[b^{-1}\bm{x}-\bm{\hat{x}}-\tilde{x}\big]+b^{-1}\lambda( \bm{x}_1-y)\nonumber\\
%&+ b^{-1}\big(g(\bm{z},\bm{x}_1,\mu)-g(z,y,v,w)\big)\nonumber\\
%= &A\big[\bm{E}-\tilde{x}\big]-b^{-1}\lambda e\nonumber\\
%&+ b^{-1}\big(g(\bm{z},\bm{x}_1,\mu)-g(\bar{z}+\bm{z},e+\bm{x}_1,\mu)\big)\nonumber\\
&=A\bar{x}+ b^{-1}\!\big[\bar{g}(\bar{z}, e, \mu)+\lambda e\big],\\
\dot{\bar{\eta}}%=&\dot{\eta}-\dot{\bm{\eta}}^{\star}-Nb^{-1}\dot{e}\nonumber\\
%=&M\eta+N\hat{x}_2-M\bm{\eta}^{\star} - N\bm{\hat{x}}_2\nonumber\\
%&-N\hat{x}_2-N\bar{x}_2 +N\bm{\hat{x}}_2-Nb^{-1}\bar{g}_1(\bar{z}, e, \mu)\nonumber\\
%=&M\eta-M\bm{\eta}^{\star} -N\bar{x}_2 -Nb^{-1}\bar{g}_1(\bar{z}, e, \mu) \nonumber\\
&=M\bar{\eta} -N\!\left(\bar{x}_2 -b^{-1}e+b^{-1}\bar{g}_1(\bar{z}, e, \mu)\right)\!,\\
\dot{e}%=&b\hat{x}_2+b\tilde{x}_2+g_1(z,y,v,w)- \dot{\bm{x}}_1\nonumber \\
%=&b\hat{x}_2+b\tilde{x}_2+g_1(z,y,v,w)\nonumber\\
%&-   C_cA_c\bm{x}(\mu)-C_cg(\bm{z}(\mu),\bm{x}_1(\mu),\mu)-C_cB_cb\bm{u}(\mu)\nonumber\\
%=&b\hat{x}_2+b\tilde{x}_2-   \bm{x}_2(\mu)\nonumber\\
%&+g_1(z,y,v,w)-g_1(\bm{z}(\mu),\bm{x}_1(\mu),\mu)\nonumber\\
%=&b\hat{x}_2+b\tilde{x}_2-  b\bm{E}_2 -b\bm{\hat{x}}_2\nonumber\\
%&+g_1(\bar{z}+\bm{z},e+\bm{x}_1,v,w)-g_1(\bm{z},\bm{x}_1,\mu)\nonumber\\
%=&b\hat{x}_2+b\bar{x}_2 -b\bm{\hat{x}}_2+\bar{g}_1(\bar{z}, e, \mu) \nonumber\\
&=b(\hat{x}_2-\bm{\hat{x}}_2)+b\bar{x}_2 +\bar{g}_1(\bar{z}, e, \mu) ,\\
 \dot{\hat{x}}_i
 &= \hat{x}_{i+1}-\lambda_i\hat{x}_1,\quad i=2,\dots,r-1,\\
  \dot{\hat{x}}_r
  &= u-\lambda_r\hat{x}_1,
\end{align}\end{subequations}
where $\mu=\col(\sigma,v,w)$,
\begin{align*}
\bar{f}(\bar{z},e,\mu)
&=f(\bar{z}+\bm{z},e+\bm{x}_1,\mu)-f(\bm{z},\bm{x}_1,\mu),\\
\bar{g}(\bar{z}, e, \mu)
&= g(\bar{z}+\bm{z},e+\bm{x}_1,\mu)-g(\bm{z},\bm{x}_1,\mu).
\end{align*}
It can be verified that, for all $\mu\in \mathds{V}\times\mathds{W} \times\mathds{S}$, $\bar{f}(0,0,\mu)=0$ and $\bar{g}(0, 0, \mu)=0$. 
Problem \ref{Prob: second-order-Output-regulation} can be solved if a control law can be found to stabilize the system \eqref{Main-sys1}.

Let $\bar{x}_c=\col(\bar{x}, \bar{\eta})$ and $\bar{G}_c(\bar{z}, e, \mu)=b^{-1}\col\big(Ne-N\bar{g}_1(\bar{z}, e, \mu), \bar{g}(\bar{z}, e, \mu)+\lambda e\big)$; the system \eqref{Main-sys1} can be rewritten into the form
\begin{subequations}\label{Main-sys-com}\begin{align}
\dot{\bar{z}} 
&=\bar{f}(\bar{z},e,\mu),\\
\dot{\bar{x}}_c
&= \underbrace{\left[\begin{matrix}M & -NC_cA_c\\
\textbf{0}&A
\end{matrix}\right]}_{M_c}\bar{x}_c+\bar{G}_c(\bar{z}, e, \mu) ,\label{Main-sys-com-xc}\\
\dot{e}
&=b(\hat{x}_2-\chi(\bm{\eta}^*))+b\bar{x}_2 +\bar{g}_1(\bar{z}, e, \mu) ,\\
 \dot{\hat{x}}_i
 &= \hat{x}_{i+1}-\lambda_i\hat{x}_1, \quad i=2,\dots,r-1,\\
\dot{\hat{x}}_r
&=u-\lambda_r \hat{x}_1.
\end{align}\end{subequations}
%where $M_c=\left[\begin{matrix}M & -NC_cA_c\\
%0&A
%\end{matrix}\right]$. 
It can be verified that, for all $\mu\in \mathds{V}\times\mathds{W} \times\mathds{S}$,  $\bar{G}_c(0, 0, \mu)=\bm{0}$ and the matrix $M_c$ is Hurwitz. Hence, the $(\bar{z}, \bar{x}_c)$-subsystem in system \eqref{Main-sys-com} is in a similar form as the system (8) of \cite{wang2023nonparametric}. 
As a result,  the $(\bar{z}, \bar{x}_c)$-subsystem in system \eqref{Main-sys-com}, under Assumptions \ref{ass0}, \ref{ass1}, and \ref{H2}, admits the following properties (see Properties 1 and 2 in \cite{wang2023nonparametric}): 
%\begin{proposition}\label{Prop-1}%
\begin{prop}\label{property1}%
There exists a smooth input-to-state Lyapunov function $V_0\equiv V_0(\bar{z},\bar{x}_c)$ satisfying
\begin{align}
\underline{\alpha}_0(\|\bar{Z}\|)&\leq V_0(\bar{Z})\leq \bar{\alpha}_0(\|\bar{Z}\|),\notag\\
\dot{V}_0 &\leq  -\|\bar{Z}\|^2+\bar{\gamma}^*\bar{\gamma} \left(e\right),\label{V0}
\end{align}
for some positive constant $\bar{\gamma}^*$ and comparison functions $\underline{\alpha}_0(\cdot)\in \mathcal{K}_{\infty}$, $\bar{\alpha}_0(\cdot)\in \mathcal{K}_{\infty}$, and $\bar{\gamma}(\cdot)\in \mathcal{K}_\infty$ with $\bar{Z}=\textnormal{\col}(\bar{z},\bar{x}_c)$.
\end{prop}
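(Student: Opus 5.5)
We build $V_0$ as a weighted sum of an ISS Lyapunov function for the $\bar z$-subsystem and a quadratic function for the $\bar x_c$-subsystem. The construction is a cascade argument: $\bar z$ is driven only by $e$, and $\bar x_c$ is driven by $(\bar z,e)$ through the Hurwitz matrix $M_c$.

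\emph{Step 1 (quadratic part).} Since $M$ and $A$ are Hurwitz and $M_c$ is block upper-triangular, $M_c$ is Hurwitz. We therefore pick $P_c=P_c^{\top}>0$ with $P_cM_c+M_c^{\top}P_c=-I$. Along \eqref{Main-sys-com-xc}, Young's inequality gives
$$\frac{d}{dt}\big(\bar x_c^{\top}P_c\bar x_c\big)\le -\tfrac12\|\bar x_c\|^2+2\|P_c\|^2\|\bar G_c(\bar z,e,\mu)\|^2 .$$
The map $\bar G_c$ is smooth and vanishes at $(\bar z,e)=(0,0)$ for every $\mu$, and $\mu$ ranges over a compact set (the exosystem orbit stays bounded by Assumption \ref{ass0}). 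A standard lemma then yields smooth functions $\phi_1,\phi_2\ge1$ with
$$\|\bar G_c\|^2\le \phi_1(\bar z)\|\bar z\|^2+\phi_2(e)e^2$$
uniformly in $\mu$.

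\emph{Step 2 (ISS part with a tailored supply rate).} By Remark \ref{rem-barV}, the changing-supply-function technique provides $\bar V_{\bar z}$ with
$$\dot{\bar V}_{\bar z}\le-\Delta_{\bar z}(\bar z)\|\bar z\|^2+\delta_{\bar z}\gamma_{\bar z}(e)e^2,$$
and here $\Delta_{\bar z}>0$ may be chosen freely. We choose $\Delta_{\bar z}(\bar z)\ge 4\|P_c\|^2\phi_1(\bar z)+2$.

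We then set $V_0=2\bar V_{\bar z}(\bar z)+\bar x_c^{\top}P_c\bar x_c$. Summing the two estimates gives
$$\dot V_0\le -2\|\bar z\|^2-\tfrac12\|\bar x_c\|^2+\big(2\delta_{\bar z}\gamma_{\bar z}(e)+2\|P_c\|^2\phi_2(e)\big)e^2 .$$
Scaling $V_0$ by $2$ produces exactly $-\|\bar Z\|^2$ plus a term in $e$. That term is bounded by $\bar\gamma^*\bar\gamma(e)$, where $\bar\gamma(e)=(\gamma_{\bar z}(e)+\phi_2(e))e^2$ is made class $\mathcal K_\infty$ by replacing it with a nondecreasing majorant, e.g.\ $\bar\gamma(s)=s^2\max_{|\tau|\le s}(\gamma_{\bar z}(\tau)+\phi_2(\tau))+s^2$. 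The sandwich bounds follow from those of $\bar V_{\bar z}$ together with $\lambda_{\min}(P_c)\|\bar x_c\|^2\le\bar x_c^{\top}P_c\bar x_c\le\lambda_{\max}(P_c)\|\bar x_c\|^2$. Smoothness of $V_0$ requires $\bar V_{\bar z}$ to be smooth. If only continuity is available, we would invoke the smooth converse ISS Lyapunov theorem of Sontag–Wang before applying the supply change.

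\emph{Main obstacle.} The delicate step is making the quadratic coupling term $\phi_1(\bar z)\|\bar z\|^2$ coming from $\bar G_c$ dominable globally. This is precisely why the supply rate must have the form $\Delta_{\bar z}(\bar z)\|\bar z\|^2$ with arbitrary $\Delta_{\bar z}$. Obtaining that form from Assumption \ref{H2} uses the condition $\limsup_{\varsigma\to0^+}\alpha_{\bar z}^{-1}(\varsigma^2)/\varsigma<\infty$, which guarantees $\alpha_{\bar z}(s)\ge c s^2$ near zero. A second point to check is that the decomposition bounds in Step 1 are uniform over $\mu$ in the compact set, which relies on the compactness of $\mathds V\times\mathds W\times\mathds S$ and the boundedness of the exosystem trajectories.
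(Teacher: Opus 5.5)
Your proposal is correct and is essentially the paper's argument (given in the remark following Property~\ref{property2}). The paper takes $V_0=\bar V_{\bar z}+\bar x_c^{\top}P_c\bar x_c$, uses a Lemma~7.8-type bound $\|P_c\bar G_c\|^2\le\pi_1(\bar z)\|\bar z\|^2+\phi_1(e)e^2$, and changes the supply rate with $\Delta_{\bar z}>\pi_1+1$; your version differs only in the weights and in writing out the $\mathcal{K}_\infty$ majorant of the $e$-term.

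One caution concerns your fallback through the Sontag--Wang converse theorem. That theorem does not by itself preserve the quadratic lower bound on the decay rate near the origin, which the supply change needs and which is assumed only for the given $\alpha_{\bar z}$. The fallback is unnecessary anyway: the $C^1$ function $V_{\bar z}$ implicit in Assumption~\ref{H2} already yields a $C^1$ $V_0$, and that is all the later analysis uses.
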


\begin{prop}\label{property2}%
There are positive smooth functions $\gamma_{g0}(\cdot)$ and $\gamma_{g1}(\cdot)$ such that
$$b^2\bar{x}_2^2+\|\bar{g}_1(\bar{z}, e, \mu)\|^2\leq \gamma_{g0}(\bar{Z})\|\bar{Z}\|^2+e^2\gamma_{g1}(e).$$
\end{prop}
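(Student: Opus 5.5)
We need to prove Property \ref{property2}: a bound on $b^2\bar{x}_2^2+\|\bar{g}_1(\bar{z},e,\mu)\|^2$ by $\gamma_{g0}(\bar{Z})\|\bar{Z}\|^2+e^2\gamma_{g1}(e)$. The plan is to treat the two terms separately. The $\bar{x}_2$ term is immediate. The $\bar{g}_1$ term follows from smoothness of $g_1$, the vanishing $\bar{g}_1(0,0,\mu)=0$, and compactness of the set where $\mu$ lives.

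\emph{The $\bar{x}_2$ term.} Since $\bar{x}_2$ is a component of $\bar{x}$, which is part of $\bar{x}_c\subset\bar{Z}$, we have $b^2\bar{x}_2^2\le b^2\|\bar{Z}\|^2$. This contributes the constant $b^2$ to $\gamma_{g0}$.

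\emph{The $\bar{g}_1$ term.} First, $\bar{g}_1(\bar{z},e,\mu)=g_1(\bar{z}+\bm{z}(\mu),e+\bm{x}_1(\mu),\mu)-g_1(\bm{z}(\mu),\bm{x}_1(\mu),\mu)$ is smooth and vanishes at $(\bar{z},e)=(0,0)$. By the Hadamard lemma (the mean value theorem in integral form) we can write
\[
\bar{g}_1(\bar{z},e,\mu)=\bar{g}_{1z}(\bar{z},e,\mu)\bar{z}+\bar{g}_{1e}(\bar{z},e,\mu)e,
\]
where $\bar{g}_{1z}=\int_0^1\partial_{\bar z}\bar g_1(s\bar z,se,\mu)\,ds$, $\bar{g}_{1e}$ is defined analogously, and both are continuous. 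Next, $\mu$ ranges over $\mathds{V}\times\mathds{W}\times\mathds{S}$ in the regulation problem. Because $\mathds{S}$ is compact and $S(\sigma)$ has purely imaginary simple eigenvalues (Assumption \ref{ass0}), $v(t)$ remains in a compact set $\bar{\mathds{V}}$. So $\mu$ lies in a compact set $\mathds{M}$. Taking suprema over $\mu\in\mathds{M}$ of the continuous functions $\|\bar{g}_{1z}\|^2$ and $|\bar{g}_{1e}|^2$ gives continuous functions of $(\bar z,e)$, which we dominate by smooth ones. This produces smooth positive functions $\phi_1,\phi_2$ with
\[
\|\bar{g}_1\|^2\le 2\phi_1(\bar z,e)\|\bar z\|^2+2\phi_2(\bar z,e)e^2.
\]

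\emph{Decoupling the arguments.} The main obstacle is that $\phi_1$ and $\phi_2$ depend jointly on $(\bar z,e)$, whereas the target bound needs coefficients depending only on $\bar{Z}$ in front of $\|\bar Z\|^2$ and only on $e$ in front of $e^2$. We handle this with the standard splitting: a smooth positive $\phi(\bar z,e)$ satisfies $\phi(\bar z,e)\le\phi_a(\bar z)\phi_b(e)$, and hence also $\phi(\bar z,e)\le \phi_a(\bar z)^2+\phi_b(e)^2$. Applying this to the cross term gives
\[
\phi_2(\bar z,e)e^2\le \phi_{a}(\bar z)^2e^2+\phi_b(e)^2e^2 .
\]
The remaining mixed term $\phi_a(\bar z)^2e^2$ still needs to be split. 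Write $\phi_a(\bar z)^2\le c+\tilde\phi_a(\bar z)\|\bar z\|^2$, where $c$ is a constant and $\tilde\phi_a$ is smooth; this is possible by the same Hadamard argument applied to $\phi_a^2-\phi_a(0)^2$. Then apply Young's inequality $\|\bar z\|^2e^2\le \tfrac12\|\bar z\|^4+\tfrac12 e^4$, so that $\tilde\phi_a(\bar z)\|\bar z\|^2e^2\le\tilde\phi_a(\bar z)^2\|\bar z\|^4+e^4$. The term $\tilde\phi_a(\bar z)^2\|\bar z\|^4$ is absorbed into $\gamma_{g0}(\bar Z)\|\bar Z\|^2$, and $e^4$ is absorbed into $e^2\gamma_{g1}(e)$. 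The $\phi_1\|\bar z\|^2$ term is treated the same way, with $e$-dependence moved into the $e^2$ term.

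Collecting all pieces gives $\gamma_{g0}(\bar Z)=b^2+(\text{smooth terms in }\|\bar Z\|)$ and $\gamma_{g1}(e)=(\text{smooth positive terms in }e)$, both positive and smooth. This is the statement of Property \ref{property2}, which is consistent with Property 2 of \cite{wang2023nonparametric}.
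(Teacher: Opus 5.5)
Your proof is correct and follows the paper's route. The paper simply notes that $b^2\bar{x}_2^2+\|\bar{g}_1(\bar{z},e,\mu)\|^2$ is smooth and vanishes at $(\bar{x}_2,\bar{z},e)=0$ for every $\mu$ in the compact parameter set, then invokes Lemma 7.8 of \cite{huang2004nonlinear}; your Hadamard, compactness and decoupling steps are an inline proof of that lemma. One small repair: Hadamard applied to $\phi_a^2-\phi_a(0)^2$ gives only a bound linear in $\|\bar{z}\|$, so reaching the form $c+\tilde{\phi}_a(\bar{z})\|\bar{z}\|^2$ needs either one extra Young step or simply $\phi_a(\bar{z})^2\le \max_{\|s\|\le 1}\phi_a(s)^2+\phi_a(\bar{z})^2\|\bar{z}\|^2$.
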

%\end{proposition}
\begin{rem}Since $\bar{G}_c(\bar{z}, e, \mu)$ in \eqref{Main-sys-com-xc} is smooth and satisfies $\bar{G}_c\!\left(0 ,0 ,\mu\right)=\bm{0}$, for all $\mu \in \mathds{V}\times\mathds{W} \times\mathds{S}$, by Lemma 7.8 in \cite{huang2004nonlinear},
$$\|P_c\bar{G}_c\!\big(\bar{z}, e ,v\big)\|^2\leq\pi_1 (\bar{z})\|\bar{z} \|^2+\phi_1(e )e^2$$
for some known smooth functions $\pi_1(\cdot)\geq 1$ and $\phi_1(\cdot)\geq 1$, where $P_c$ is a positive definite matrix such that $P_cM_c+M_cP_c^{\top}=-2I$. By Remark \ref{rem-barV}, for any smooth function $\Delta_{\bar{z}}(\bar{z})>0$, there exits a continuous function $\bar{V}_{\bar{z}}(\bar{z})$ satisfying 
$ \underline{\alpha}_{\bar{z}}(\|\bar{z}\|)\leq  \bar{V}_{\bar{z}}(\bar{z})\leq \bar{\alpha}_{\bar{z}}(\|\bar{z}\|)$ for some class $\mathcal{K}_{\infty}$ functions $\underline{\alpha}_{\bar{z}}(\cdot)$ and $\bar{\alpha}_{\bar{z}}(\cdot)$ such that, for any $\mu \in \mathds{V}\times\mathds{W} \times\mathds{S}$, the time derivative of $ \bar{V}_{\bar{z}}(\bar{z})$ along the trajectory \eqref{Reg-2} satisfies 
$$\dot{\bar{V}}_{\bar{z}}(\bar{z})\leq -\Delta_{\bar{z}}(\bar{z})\|\bar{z}\|^2+\delta_{\bar{z}}\gamma_{\bar{z}}(e)e^2,$$
where $\delta_{\bar{z}}$ and $\gamma_{\bar{z}}(\cdot)$ are some positive constant and positive function. Let $V_0(\bar{Z})=\bar{V}_{\bar{z}}(\bar{z})+\bar{x}_c^{\top}P_c\bar{x}_c$, which satisfies  $\underline{\alpha}_0(\|\bar{Z}\|)\leq V_0(\bar{Z})\leq \bar{\alpha}_0(\|\bar{Z}\|)$ for some class $\mathcal{K}_{\infty} $ functions $\underline{\alpha}_0(\cdot)\in \mathcal{K}_{\infty}$ and $\bar{\alpha}_0(\cdot)\in \mathcal{K}_{\infty}$. By choosing $\Delta_{\bar{z}}(\bar{z})>\pi_1 (\bar{z})+1$, the time derivative of $V_0(\bar{Z})$ along the $\bar{Z}$-subsystem of \eqref{Main-sys-com} satisfies 
\begin{align*}
    \dot{V}_0 \leq& -\Delta_{\bar{z}}(\bar{z})\|\bar{z}\|^2+\delta_{\bar{z}}\gamma_{\bar{z}}(e)e^2-\|x_c\|^2 +\|P_c\bar{G}_c\!\big(\bar{z}, e ,v\big)\|^2\\
   % \leq&\ -\Delta_{\bar{z}}(\bar{z})\|\bar{z}\|^2+\delta_{\bar{z}}\gamma_{\bar{z}}(e)e^2-\|x_c\|^2\\
   % &\ +\pi_1 (\bar{z})\|\bar{z} \|^2+\phi_1(e )e^2\\
   % \leq&\ -(\Delta_{\bar{z}}(\bar{z})-\pi_1 (\bar{z}))\|\bar{z}\|^2-\|x_c\|^2\\
   % &\ +\underbrace{(\delta_{\bar{z}}\gamma_{\bar{z}}(e)+\phi_1(e ))e^2}_{\bar{\gamma}^*\bar{\gamma} \left(e\right)}\\
    \leq& -(\underbrace{\Delta_{\bar{z}}(\bar{z})-\pi_1 (\bar{z})}_{> 1})\|\bar{z}\|^2-\|x_c\|^2 +\underbrace{(\delta_{\bar{z}}\gamma_{\bar{z}}(e)+\phi_1(e ))e^2}_{\bar{\gamma}^*\bar{\gamma} \left(e\right)}.
\end{align*}
Since $b^2\bar{x}_2^2+\|\bar{g}_1(\bar{z}, e, \mu)\|^2$ is smooth and vanishes at $\bm{0}$ when $\textnormal{\col}(\bar{x}_2,\bar{z}, e)=\textnormal{\col}\left(0 ,0 ,0\right)$, for all $\mu \in \mathds{V}\times\mathds{W} \times\mathds{S}$, by using Lemma 7.8 in \cite{huang2004nonlinear}, Property \ref{property2} can be verified.
\end{rem}

\section{Main results}\label{mainresults}
\subsection{Non-adaptive method in robust output regulation}

\begin{figure}[htp]
    \centering
\tikzstyle{block} = [draw, fill=mitSilverGray!10, rectangle,
    minimum height=1.5em, minimum width=6em]
    \tikzstyle{controllaw} = [draw, fill=mitBlue!60, rectangle,
    minimum height=1.5em, minimum width=1em]
      \tikzstyle{controllaw2} = [draw, fill=mitBlue!60, rectangle,
    minimum height=1.5em, minimum width=2em]
    \tikzstyle{iterative} = [draw, fill=mitRed!60, rectangle,
    minimum height=1.5em, minimum width=2em]
    \tikzstyle{iterative2} = [draw=black, fill=mitGreen!60, rectangle,
    minimum height=1.5em, minimum width=2em]
    \tikzstyle{interModel} = [draw, fill=mitPurple!60, rectangle,
    minimum height=1.5em, minimum width=6em]
\tikzstyle{sum} = [draw, fill=green!20, circle, node distance=1cm,minimum size=0.05cm]
\tikzstyle{input} = [coordinate]
\tikzstyle{output} = [coordinate]
\tikzstyle{pinstyle} = [pin edge={to-,thin,black}]
% The block diagram code is probably more verbose than necessary
\begin{tikzpicture}[scale=0.775,every node/.style={transform shape}, node distance=2cm,>=latex']
    % We start by placing the blocks
    \node [input, name=input] {};
    \node [sum, fill=mitLightBlue, right =1.625cm of input] (sum) {};
    \node [controllaw, right = 1cm of sum] (controller) {~$u = \alpha_r(\epsilon_1,\epsilon_2,\dots,\epsilon_r,k^*,\eta,\hat{x}_1)$~};
    \node [controllaw2, below left = 3cm and 0.025cm of sum] (controller-ttle) {Controller};
    \node [interModel, right = 0.15cm of controller-ttle] (interModel-ttle) {Internal Model};
    \node [iterative, right  = 0.15cm of interModel-ttle] (iterative-ttle) {Input-driven Filter};
    \node [iterative2, right = 0.15cm of iterative-ttle] (recursive-ttle) {Recursive Equations};
    \node [block, right = 1cm of controller,
            node distance=3cm, pin={[pinstyle]below:Disturbances}, fill=mitSilverGray!60] (system) {System};
    % We draw an edge between the controller and system block to
    % calculate the coordinate u. We need it to place the measurement block.
    \draw [thick,->] (controller) -- node[name=u,above] {$u$} (system);
  %  \node [output, right =0.3cm of system] (output) {};
   %  \node [output, right=0.6cm of output] (output1) {};
    \node [output,above =0.5cm of controller] (measurements) {};
    \node [output, right =0.1cm of measurements] (measurements2) {};
   
     \node [iterative, below left =0.85cm and -4.95cm of controller] (Ivlearn) { $\dot{\hat{x}}=A\hat{x}+B_cu$};
     \node [output,left =1.8825cm of Ivlearn] (Ivlearn1) {};

      \node [interModel, below =0.5cm of Ivlearn] (IMmodel) {$\dot{\eta}=M\eta+N\hat{x}_2$};
    %\node [iterativeframe, below right =0.265cm and -3.6000 of IMmodel] (iterativeframe) { };
    \node [output,left =0.35cm of IMmodel] (IMmodel2) {};
    \node [output,left =1.455cm of IMmodel2] (IMmodel3) {};
    
      \node [iterative2, below=0.65cm of sum] (ErrorTran) {$\begin{matrix}\alpha_1( \epsilon_1,k^*,\eta)\\
      \alpha_2(\epsilon_1,\epsilon_2,k^*,\eta,\hat{x}_1)\\
      \vdots\\
      \alpha_{r}(\epsilon_1,\dots,\epsilon_r,k^*,\eta,\hat{x}_1)
      \end{matrix}$};
      \node [right=1.8cm of ErrorTran.north east] (ErrorTran1){};
    % Once the nodes are placed, connecting them is easy.
    \draw [draw,thick,->] (input) -- node[above] { $h(v,w)$} (sum);
   % \draw [->] (sum) --  (controller);
    %\draw [->] (system) -- (output1);
    \draw [thick,->] (system.north)|-(measurements2)--(measurements)node[below] {$y$}-| node[pos=0.95,right] {$-$}(sum);
    \draw [thick,->] (u)|-(Ivlearn);
    \draw [thick,->] (IMmodel)--node[pos=0.5, below] {$\eta$}(IMmodel3);%-|(5.225-0.25,-0.3);%
    \draw [thick,->] (Ivlearn)--node[pos=0.5, right] {$\hat{x}_2$}(IMmodel);
    \draw [thick,->] (Ivlearn)--node[pos=0.5, below] {$\hat{x}$}(Ivlearn1);
    \draw [thick,->] (sum)--node[pos=0.5,left]{$\epsilon_1=e$}(ErrorTran);
    \draw [thick,->] (ErrorTran.north east)--node[pos=0.5,above]{$\alpha_r$}(ErrorTran1)-|(controller.south);
\end{tikzpicture}
\caption{Non-adaptive method in robust output regulation.}\label{framework}
\end{figure}

The proposed non-adaptive framework for the solution of the robust output regulation problem is shown in Fig.~\ref{framework}. The analysis presented below is based on the backstepping recursive method introduced in \cite{krstic1995nonlinear}. This technique was further generalized to neural network control for strict-feedback nonlinear systems in \cite{zhang2000adaptive}. This approach is suitable for the design of control systems that can handle the complexities and nonlinearities of the error system \eqref{Main-sys-com}. 
Backstepping is a widely used approach for analyzing lower triangular systems and remains an active area of research in the control community. 
Very recently, \cite{duan2021highIII,duan2021highIV} originally established the high-order fully actuated system approaches, and based on which the second- and higher-order methods of backstepping (recursive method) are effectively proposed for both uncertain second-order and higher-order strict-feedback nonlinear systems. 
%r example, \cite{duan2021highIII,duan2021highIV} effectively employed high-order fully actuated system approaches to generalize the second- and high-order methods of backstepping (recursive method) study both uncertain second-order and higher-order strict-feedback nonlinear systems.}
%
By iterating the control design process, the recursive method ensures the convergence, robustness, and stability of the error system \eqref{Main-sys-com}. The analysis adopts the  expressions: %$\epsilon_1=e$ and $\epsilon_{i+1}=\hat{x}_{i+1}-\alpha_i(\epsilon_1,\dots,\epsilon_i,k^*,\eta,\hat{x}_1)$ with 
\begin{align}\label{alpha-function}
 \epsilon_1= e,~~\epsilon_{i+1}=&\ \hat{x}_{i+1}-\alpha_i(\epsilon_1,\dots,\epsilon_i,k^*,\eta,\hat{x}_1),\nonumber\\
\alpha_1( \epsilon_1,k^*,\eta)=&-k^*\rho(\epsilon_1)\epsilon_1+\chi(\eta),\nonumber\\
\alpha_2(\epsilon_1,\epsilon_2,k^*,\eta,\hat{x}_1)%=&-b\epsilon_1-\epsilon_2+\lambda_{2}\hat{x}_1+\frac{\partial \alpha_1}{\partial \epsilon_1}\dot{e}_1+\frac{\partial \alpha_1 }{\partial \eta}\dot{\eta}\nonumber\\
%&+\frac{\partial \alpha_1}{\partial e}\Big[b(\hat{x}_2-\bm{\hat{x}}_2)+b\bar{x}_2 +\bar{g}_1(\bar{z}, e, \mu)\Big]\nonumber\\
=&-b\epsilon_1-\epsilon_2+\lambda_{2}\hat{x}_1+\frac{\partial \alpha_1 }{\partial \eta}\dot{\eta}\nonumber\\
&+b\frac{\partial \alpha_1}{\partial \epsilon_1}(\epsilon_2-k^*\rho(\epsilon_1)\epsilon_1)\nonumber\\
&-\frac{1}{2}\epsilon_2\!\left(\frac{\partial \alpha_1}{\partial \epsilon_1}\right)^{\!\!2}+\frac{\partial \alpha_1 }{\partial k^*}\dot{k}^*,\nonumber\\
%\alpha_3(\epsilon_1,\epsilon_2,\epsilon_{3},k^*,\eta,\hat{x}_1)=&-e_{2}-\epsilon_{3}+\lambda_{3}\hat{x}_{1}+\frac{\partial \alpha_{2} }{\partial \eta}\dot{\eta}\nonumber\\
%&+\frac{\partial \alpha_2 }{\partial \epsilon_2}\dot{e}_2+\frac{\partial \alpha_2 }{\partial \hat{x}_1}\dot{\hat{x}}_1\nonumber\\
%&+b\frac{\partial \alpha_2}{\partial \epsilon_1}(\hat{x}_2-\chi(\eta))-\frac{1}{2}\epsilon_{3}\left(\frac{\partial \alpha_2}{\partial \epsilon_1}\right)^2\nonumber\\
\alpha_i(\epsilon_1,\dots,\epsilon_i,k^*,\eta,\hat{x}_1)=&-\epsilon_{i-1}-\epsilon_i+\lambda_{i}\hat{x}_{1}+\frac{\partial \alpha_{i-1} }{\partial \eta}\dot{\eta}\nonumber\\
&+\frac{\partial \alpha_{i-1} }{\partial \hat{x}_1}\dot{\hat{x}}_1+ \sum_{j=2}^{i-1}\frac{\partial \alpha_{i-1} }{\partial \epsilon_{j}}\dot{\epsilon}_j\nonumber\\
 &+b\frac{\partial \alpha_{i-1}}{\partial \epsilon_1}(\epsilon_2-k^*\rho(\epsilon_1)\epsilon_1)\nonumber\\
 &-\frac{1}{2}\epsilon_i\!\left(\frac{\partial \alpha_{i-1}}{\partial \epsilon_1}\right)^{\!\!2}+\frac{\partial \alpha_{i-1} }{\partial k^*}\dot{k}^*,\;\nonumber\\
 & \qquad\qquad\qquad i=3,\dots,r, 
\end{align}
where $\dot{k}^*$ will be zero when $k^*$ is a constant, and $\hat{x}_{1}, \dots, \hat{x}_{r}$ and $\eta$ are generated in \eqref{input-filter} and \eqref{explicit-mas1}, respectively.
\begin{thm}\label{Theorem-1}%
For the system \eqref{Main-sys-com} under Assumptions \ref{ass0}--\ref{ass5-explicit}, there is a sufficiently large positive smooth function $\rho(\cdot)$ and a positive real number $k^*$ such that the controller 
\begin{align}
u &= \alpha_r(\epsilon_1,\epsilon_2,\dots,\epsilon_r,k^*,\eta,\hat{x}_1)\label{ESC-1b}
\end{align}
 solves Problem \ref{Prob: second-order-Output-regulation}. In addition, there exists a continuous
positive definite function $U_r(\bar{Z}, \epsilon_1,\dots, \epsilon_r)$
such that, for all $\mu\in \mathds{S}\times \mathds{V}\times \mathds{W}$,
\begin{align}\label{dotV}\dot{U}_r(\bar{Z}, \epsilon_1,\dots, \epsilon_r)\leq -\big\|\bar{Z}\big\|^{2}-\sum\nolimits_{j=1}^{r}\epsilon_{j}.\end{align}
%for any $k_0\geq k^*$. 
\end{thm}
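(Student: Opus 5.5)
We prove the theorem by a backstepping Lyapunov construction on the error system \eqref{Main-sys-com}. We build $U_r$ recursively. We start from the ISS Lyapunov function $V_0(\bar{Z})$ of Property~\ref{property1} and add quadratic terms in the backstepping errors $\epsilon_j$ defined in \eqref{alpha-function}.

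\textbf{Step 1 (the $e$-subsystem).} By $\epsilon_2=\hat{x}_2-\alpha_1$ and $\alpha_1=-k^*\rho(\epsilon_1)\epsilon_1+\chi(\eta)$,
$$\dot{\epsilon}_1=b\epsilon_2-bk^*\rho(\epsilon_1)\epsilon_1+b\big(\chi(\eta)-\chi(\bm{\eta}^\star)\big)+b\bar{x}_2+\bar{g}_1 .$$
Since $\eta-\bm{\eta}^\star=\bar{\eta}+Nb^{-1}e$, I would use continuity and the sector-type property of $\chi$ together with Lemma~7.8 of \cite{huang2004nonlinear}. This gives
$$\|\chi(\eta)-\chi(\bm{\eta}^\star)\|^2\le \pi_\chi(\bar{Z})\|\bar{Z}\|^2+\phi_\chi(e)e^2 .$$
If needed, a rescaled copy of $V_0$ obtained by changing the supply function dominates the $\pi_\chi(\bar{Z})$ factor, exactly as in the remark after Property~\ref{property2}. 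Set
$$U_1=\ell V_0(\bar{Z})+\tfrac12\epsilon_1^2 .$$
Young's inequality and Property~\ref{property2} then give
$$\dot U_1\le -\ell\|\bar{Z}\|^2+\ell\bar{\gamma}^*\bar{\gamma}(e)+c\big(\gamma_{g0}(\bar{Z})+\pi_\chi(\bar{Z})\big)\|\bar{Z}\|^2+c(\gamma_{g1}(e)+\phi_\chi(e))e^2-bk^*\rho(e)e^2+b\epsilon_1\epsilon_2 .$$
Since $\bar{\gamma}(e)\le \bar\phi(e)e^2$ for a smooth $\bar\phi$ (again by Lemma~7.8), I choose $k^*$ and $\rho(\cdot)$ so that
$$bk^*\rho(s)\ge \ell\bar{\gamma}^*\bar\phi(s)+c(\gamma_{g1}(s)+\phi_\chi(s))+2 .$$
I also let the changed supply rate on $V_0$ absorb the $\bar{Z}$-dependent gains. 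The result is
$$\dot U_1\le -\|\bar{Z}\|^2-\epsilon_1^2+b\epsilon_1\epsilon_2 .$$

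\textbf{Step 2 and beyond.} For $i\ge 2$, set $U_i=U_{i-1}+\tfrac12\epsilon_i^2$. Compute $\dot\epsilon_i=\hat{x}_{i+1}-\lambda_i\hat{x}_1-\dot\alpha_{i-1}$. In $\dot\alpha_{i-1}$, every term is measurable except the $\partial\alpha_{i-1}/\partial\epsilon_1$ term. Those measurable terms, namely the $\eta$-, $\hat{x}_1$-, $\epsilon_j$- and $k^*$-derivatives, are cancelled exactly by the corresponding terms in $\alpha_i$. The cross term $-\epsilon_{i-1}$ in $\alpha_i$ (or $-b\epsilon_1$ when $i=2$) cancels the coupling left over from the previous step. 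The uncertain part is
$$\frac{\partial\alpha_{i-1}}{\partial\epsilon_1}\Big(\dot e-b(\epsilon_2-k^*\rho\epsilon_1)\Big)=\frac{\partial\alpha_{i-1}}{\partial\epsilon_1}\Big(b(\chi(\eta)-\chi(\bm{\eta}^\star))+b\bar{x}_2+\bar{g}_1\Big).$$
It is handled by the damping term $-\tfrac12\epsilon_i(\partial\alpha_{i-1}/\partial\epsilon_1)^2$ together with Young's inequality:
$$\epsilon_i\frac{\partial\alpha_{i-1}}{\partial\epsilon_1}\Delta\le \tfrac12\epsilon_i^2\Big(\frac{\partial\alpha_{i-1}}{\partial\epsilon_1}\Big)^2+\tfrac12\Delta^2 .$$
The remainder $\tfrac12\Delta^2$ is bounded by Property~\ref{property2} and the $\chi$-bound. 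It is absorbed by reserving margin in $\ell$ and $\rho$ at step 1, enlarged by a factor proportional to $r$. Finally, $u=\alpha_r$ removes $\epsilon_{r+1}$. Summing gives
$$\dot U_r\le -\|\bar{Z}\|^2-\sum_{j=1}^r\epsilon_j^2 ,$$
which is \eqref{dotV}, with $\epsilon_j$ read as $\epsilon_j^2$.

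\textbf{Conclusion and main obstacle.} $U_r$ is positive definite and radially unbounded in $(\bar{Z},\epsilon)$. Hence all closed-loop signals are bounded, using $\eta=\bar{\eta}+\bm{\eta}^\star+Nb^{-1}e$ and the boundedness of $v$ from Assumption~\ref{ass0}, together with the $\hat{x}_i$ recovered from the $\epsilon_i$. LaSalle/Barbalat then gives $e=\epsilon_1\to0$, which solves Problem~\ref{Prob: second-order-Output-regulation}. The main obstacle is the uniform bound on $\chi(\eta)-\chi(\bm{\eta}^\star)$. The map $\chi$ is only continuous, so a quadratic bound requires the Hankel/Sylvester structure: $\bm{\eta}^\star=\theta$, so that $\chi$ is explicit and locally Lipschitz where $\Theta(\theta)$ is nonsingular, which is guaranteed by Assumption~\ref{ass5-explicit}. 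If this fails, I would instead work with a smooth Lipschitz extension of $\chi$ on a compact set containing the steady states, and obtain a semiglobal version of the bound.
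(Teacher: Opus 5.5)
Your proposal follows essentially the same route as the paper. Both use a changed-supply-rate ISS Lyapunov function for $\bar Z$, quadratic backstepping in $\epsilon_1,\dots,\epsilon_r$ with the damping terms $-\tfrac12\epsilon_i(\partial\alpha_{i-1}/\partial\epsilon_1)^2$ plus Young's inequality, and a final choice of $\Delta_Z$, $\rho$, $k^*$ dominating the residual $b^2\bar x_2^2+\bar g_1^2+b^2\bar\chi^2$; you absorb it step by step where the paper accumulates $r$ copies and bounds them at the end, which is cosmetic. You are, if anything, more careful than the paper, which leaves the factor $r$ out of its gain inequalities and simply asserts $\chi\in C^1$; just note that the $\partial\alpha_{i-1}/\partial\eta$ terms in the recursion already require a smooth, globally defined $\chi$ (such as the regularized Hankel construction with $O(\Theta)$ in the paper's final remark), not a merely continuous one, so that is the construction your ``main obstacle'' paragraph should invoke.
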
%

\begin{proof} From Property \ref{property1}, the changing supply rate technique \cite{sontag1995changing} can be applied to show that, given any smooth function $\Delta_{Z}(\bar{Z})>0$, there exists a continuous function $V_{1}(\bar{Z})$ satisfying
$$\underline{\alpha}_{1}\big(\big\|\bar{Z} \big\|^2\big)\leq V_{1}\big( \bar{Z}  \big)\leq\overline{\alpha}_{1}\big(\big\|\bar{Z} \big\|^2\big)$$
 for some class $\mathcal{K}_{\infty}$ functions $\underline{\alpha}_{1}(\cdot)$ and $\overline{\alpha}_{1}(\cdot)$, such that, for all $\mu\in \Sigma$, along the trajectories of the $Z $ subsystem, $$\dot{V}_{1} \leq-\Delta_{Z}(\bar{Z} )\big\|\bar{Z} \big\|^2+ \hat{\gamma}^* \hat{\gamma} \left(\epsilon_1\right)\epsilon_1^2, $$
where $\hat{\gamma}^*$ is known positive constant and $\hat{\gamma} \left(\cdot\right)\geq 1$ is a known smooth positive definite function.

Define the Lyapunov function $U_1(\bar{Z}, \epsilon_1)=V_{1}\big( \bar{Z}  \big)+ \epsilon_1^2$. Then, the time derivative of $U_1\equiv U_1(\bar{Z}, \epsilon_1)$ along the trajectory of $\epsilon_1$-subsystem with $\hat{x}_2=\epsilon_2+\alpha_1$ and $\eta=\bar{\eta}+\bm{\eta}^{\star}+Nb^{-1}\epsilon_1 $ leads to
%\begin{align}\label{U1-derivative}
%\dot{U}_1(\bar{Z}, e)%=& \dot{V}_{1}\big( \bar{Z}  \big)+ 2\epsilon_1\dot{e}_1\nonumber\\
%\leq &-\Delta_{Z}(\bar{Z} )\big\|\bar{Z} \big\|^2+ \hat{\gamma}^* \hat{\gamma} \left(\epsilon_1\right)\epsilon_1^2\nonumber\\
%&+2b\epsilon_1(\hat{x}_2-\chi(\bm{\eta}^*))+2b\epsilon_1\bar{x}_2 +2\epsilon_1\bar{g}_1(\bar{z}, \epsilon_1, \mu)\nonumber\\
%\leq &-\Delta_{Z}(\bar{Z} )\big\|\bar{Z} \big\|^2-\big(2b k^*\rho(\epsilon_1)-\hat{\gamma}^* \hat{\gamma} \left(\epsilon_1\right)\big)\epsilon_1^2\nonumber\\
%&+2b\epsilon_1(\epsilon_2-\bar{\chi}(\bar{\eta}, \epsilon_1, \mu))\\
%&+2b\epsilon_1\bar{x}_2 +2\epsilon_1\bar{g}_1(\bar{z}, \epsilon_1, \mu)\nonumber\\
%\leq &-\Delta_{Z}(\bar{Z} )\big\|\bar{Z} \big\|^2-\big(2b k^*\rho(\epsilon_1)-\hat{\gamma}^* \hat{\gamma} \left(\epsilon_1\right)\big)\epsilon_1^2\nonumber\\
%&+2b\epsilon_1\epsilon_2-2b\epsilon_1\bar{\chi}(\bar{\eta}, \epsilon_1, \mu)\\
%&+2b\epsilon_1\bar{x}_2 +2\epsilon_1\bar{g}_1(\bar{z}, \epsilon_1, \mu)\nonumber\\
%\leq &-\Delta_{Z}(\bar{Z} )\big\|\bar{Z} \big\|^2\nonumber\\
%&-\big(2b k^*\rho(\epsilon_1)-3-\hat{\gamma}^* \hat{\gamma} \left(\epsilon_1\right)\big)\epsilon_1^2\nonumber\\
%&+2b\epsilon_1\epsilon_2+\Delta_1( \epsilon_1,\bar{Z},\mu)
%\end{align}
\begin{align}\label{U1-derivative}
\dot{U}_1(\bar{Z}, \epsilon_1)=&\ \dot{V}_{1}\big( \bar{Z}  \big)+ 2\epsilon_1\dot{\epsilon}_1\nonumber\\
\leq &-\Delta_{Z}(\bar{Z} )\big\|\bar{Z} \big\|^2+ \hat{\gamma}^* \hat{\gamma} \left(\epsilon_1\right)\epsilon_1^2+2\epsilon_1\bar{g}_1(\bar{z}, \epsilon_1, \mu) \nonumber\\
&+2b\epsilon_1(\underbrace{\epsilon_2+\alpha_1( \epsilon_1,k^*,\eta)}_{\hat{x}_2}-\chi(\bm{\eta}^*))+2b\epsilon_1\bar{x}_2\nonumber\\
\leq &-\Delta_{Z}(\bar{Z} )\big\|\bar{Z} \big\|^2+ \hat{\gamma}^* \hat{\gamma} \left(\epsilon_1\right)\epsilon_1^2+2\epsilon_1\bar{g}_1(\bar{z}, \epsilon_1, \mu)\nonumber\\
&+2b\epsilon_1(\epsilon_2+\underbrace{\alpha_1( \epsilon_1,k^*,\eta)}_{-k^*\rho(\epsilon_1)\epsilon_1+\chi(\eta)}-\chi(\bm{\eta}^*))+2b\epsilon_1\bar{x}_2 \nonumber\\
\leq &-\Delta_{Z}(\bar{Z} )\big\|\bar{Z} \big\|^2-\big(2b k^*\rho(\epsilon_1)-\hat{\gamma}^* \hat{\gamma} \left(\epsilon_1\right)\big)\epsilon_1^2\nonumber\\
&+2b\epsilon_1(\epsilon_2-\bar{\chi}(\bar{\eta},\epsilon_1,\mu))\nonumber\\
&+2b\epsilon_1\bar{x}_2 +2\epsilon_1\bar{g}_1(\bar{z}, \epsilon_1, \mu)\nonumber\\
\leq &-\Delta_{Z}(\bar{Z} )\big\|\bar{Z} \big\|^2-\big(2b k^*\rho(\epsilon_1)-\hat{\gamma}^* \hat{\gamma} \left(\epsilon_1\right)\big)\epsilon_1^2\nonumber\\
&+2b\epsilon_1\epsilon_2-2b\epsilon_1\bar{\chi}(\bar{\eta},\epsilon_1,\mu)\nonumber\\
&+2b\epsilon_1\bar{x}_2 +2\epsilon_1\bar{g}_1(\bar{z}, \epsilon_1, \mu)\nonumber\\
\leq &-\Delta_{Z}(\bar{Z} )\big\|\bar{Z} \big\|^2-\big(2b k^*\rho(\epsilon_1)-3-\hat{\gamma}^* \hat{\gamma} \left(\epsilon_1\right)\big)\epsilon_1^2\nonumber\\
&+2b\epsilon_1\epsilon_2+\Delta_1( \epsilon_1,\bar{Z},\mu)
\end{align}
where 
\begin{align*}
    \Delta_1( \epsilon_1,\bar{Z},\mu)&=b^2\bar{x}_2^2 +\bar{g}_1(\bar{z}, \epsilon_1, \mu)^2+b^2\bar{\chi}(\bar{\eta}, \epsilon_1, \mu)^2,\\
    \bar{\chi}(\bar{\eta}, \epsilon_1, \mu) &\equiv \chi(\bar{\eta}+\bm{\eta}^*+Nb^{-1} \epsilon_1)-\chi(\bm{\eta}^*).
\end{align*}
Now let $ U_2(\bar{Z}, \epsilon_1, \epsilon_2)=U_1(\bar{Z}, \epsilon_1)+\epsilon_2^2$. The time derivative of $U_2\equiv U_2(\bar{Z}, \epsilon_1, \epsilon_2)$ along the trajectory of $\epsilon_2$-subsystem with $\hat{x}_3=\epsilon_{3}+\alpha_2$ is given by 
\begin{align}%\label{U1-derivative-1}
\dot{U}_2 \leq &\ \dot{U}_1+2 \epsilon_2\dot{\epsilon}_2\nonumber\\
\leq&-\Delta_{Z}(\bar{Z} )\big\|\bar{Z} \big\|^2-\big(b k^*\rho(\epsilon_1)-3-\hat{\gamma}^* \hat{\gamma} \left(\epsilon_1\right)\big)\epsilon_1^2\nonumber\\
&+2b\epsilon_1\epsilon_2+\Delta_1( \epsilon_1,\bar{Z},\eta)+2 \epsilon_2(\epsilon_{3}+\alpha_2-\lambda_2\hat{x}_1-\dot{\alpha}_1)\nonumber\\
\leq&-\Delta_{Z}(\bar{Z} )\big\|\bar{Z} \big\|^2-\big(b k^*\rho(\epsilon_1)-3-\hat{\gamma}^* \hat{\gamma} \left(\epsilon_1\right)\big)\epsilon_1^2\nonumber\\
&+2b\epsilon_1\epsilon_2+\Delta_1( \epsilon_1,\bar{Z},\eta)\nonumber\\
&+2 \epsilon_2\Big(\epsilon_{3}+\alpha_2-\lambda_2\hat{x}_1-\frac{\partial \alpha_1}{\partial \epsilon_1}\dot{\epsilon}_1-\frac{\partial \alpha_1 }{\partial \eta}\dot{\eta}-\frac{\partial \alpha_1 }{\partial k^*}\dot{k}^*\Big)\nonumber\\
\leq&-\Delta_{Z}(\bar{Z} )\big\|\bar{Z} \big\|^2-\big(b k^*\rho(\epsilon_1)-3-\hat{\gamma}^* \hat{\gamma} \left(\epsilon_1\right)\big)\epsilon_1^2\nonumber\\
&+2b\epsilon_1\epsilon_2+\Delta_1( \epsilon_1,\bar{Z},\eta)+2 \epsilon_2\epsilon_{3}\nonumber\\
&+2 \epsilon_2\Big(\alpha_2-\lambda_2\hat{x}_1-\frac{\partial \alpha_1 }{\partial \eta}\dot{\eta}-b\frac{\partial \alpha_1}{\partial \epsilon_1}(\underbrace{\epsilon_2+\alpha_1}_{\hat{x}_2}-\chi(\eta))\nonumber\\
&-\frac{\partial \alpha_1}{\partial \epsilon_1}\big[b\bar{\chi}(\bar{\eta},\epsilon_1,\mu)+b\bar{x}_2 +\bar{g}_1(\bar{z}, \epsilon_1, \mu)\big]-\frac{\partial \alpha_1 }{\partial k^*}\dot{k}^*\Big)\nonumber\\
\leq&-\Delta_{Z}(\bar{Z} )\big\|\bar{Z} \big\|^2-\big(b k^*\rho(\epsilon_1)-3-\hat{\gamma}^* \hat{\gamma} \left(\epsilon_1\right)\big)\epsilon_1^2\nonumber\\
&+\Delta_1( \epsilon_1,\bar{Z},\eta)+2 \epsilon_2\epsilon_{3}\nonumber\\
&+2 \epsilon_2\Big(\alpha_2-\epsilon_2+ \underbracea{\epsilon_2+b\epsilon_1-\lambda_2\hat{x}_1-\frac{\partial \alpha_1 }{\partial \eta}\dot{\eta}}\nonumber\\
&\underbraceb{{}-b\frac{\partial \alpha_1}{\partial \epsilon_1}(\epsilon_2\underbrace{-k^*\rho(\epsilon_1)\epsilon_1+\chi(\eta)}_{\alpha_1}-\chi(\eta))-\frac{\partial \alpha_1 }{\partial k^*}\dot{k}^*{}}_{-\alpha_2 }\nonumber\\
&\underbraced{{}+\frac{1}{2}\epsilon_2\Big(\frac{\partial \alpha_1}{\partial \epsilon_1}\Big)^2}\Big)+\epsilon^2_2\nonumber\\
&+\underbrace{\big[b^2\bar{\chi}(\bar{\eta},\epsilon_1,\mu)^2+b^2\bar{x}_2^2 +\bar{g}_1(\bar{z}, \epsilon_1, \mu)^2\big]}_{\Delta_1( \epsilon_1,\bar{Z},\eta)}\nonumber\\
\leq&-\Delta_{Z}(\bar{Z} )\big\|\bar{Z} \big\|^2+2\Delta_1( \epsilon_1,\bar{Z},\mu)+2 \epsilon_2\epsilon_{3}\nonumber\\
&-\big(2b k^*\rho(\epsilon_1)-3-\hat{\gamma}^* \hat{\gamma} \left(\epsilon_1\right)\big)\epsilon_1^2-\epsilon_2^2.\nonumber
\end{align}

Now let $U_i(\bar{Z}, \epsilon_1, \dots,\epsilon_i)=U_{i-1}(\bar{Z}, \epsilon_1,\dots,\epsilon_{i-1})+\epsilon_i^2$. The time derivative of $U_i\equiv U_i(\bar{Z}, \epsilon_1,\dots,e_{i})$ along the trajectory of $\epsilon_i$-subsystem with $\hat{x}_{i+1}=\epsilon_{i+1}+\alpha_i$ is given by 
\begin{align}%\label{U1-derivative-i}
\dot{U}_i\leq&-\Delta_{Z}(\bar{Z} )\big\|\bar{Z} \big\|^2+i\Delta_1( \epsilon_1,\bar{Z},\mu)+2 \epsilon_i\epsilon_{i+1}\nonumber\\
&-\big(2b k^*\rho(\epsilon_1)-3-\hat{\gamma}^* \hat{\gamma} \left(\epsilon_1\right)\big)\epsilon_1^2-\sum\nolimits_{j=2}^{i}\epsilon_{j}^2.\nonumber
\end{align}
Finally, at $i=r$ and ${\epsilon}_{r+1}=0$ results in
\begin{align}\label{U1-derivative-r}
\dot{U}_r\leq&-\Delta_{Z}(\bar{Z} )\big\|\bar{Z} \big\|^2+r\Delta_1( \epsilon_1,\bar{Z},\mu)\nonumber\\
&-\big(2b k^*\rho(\epsilon_1)-3-\hat{\gamma}^* \hat{\gamma} \left(\epsilon_1\right)\big)\epsilon_1^2-\sum\nolimits_{j=2}^{r}\epsilon_{j}^2.
\end{align}
From \cite{kreisselmeier2003nonlinear}, $\chi(\cdot)$ is a continuously differentiable function defined in \eqref{IM-00}. 
Moreover, it can be verified that the function $\bar{\chi}(\bar{\eta},\epsilon_1,\mu)$ is continuous and vanishes at $\col(\bar{z}, \epsilon_1, \bar{\eta})=\col(0,0,0)$ for all $\mu \in \mathds{V}\times\mathds{W} \times\mathds{S}$. As a result, the function $\Delta_1( \epsilon_1, \bar{Z}, \mu)=b^2\bar{x}_2^2 +\bar{g}_1(\bar{z}, \epsilon_1, \mu)^2+b^2\bar{\chi}(\bar{\eta},\epsilon_1,\mu)^2$ is continuous differentiable and vanishes at $\col(\bar{Z}, \epsilon_1, \mu)=\col(0,0,0)$ for all $\mu \in \mathds{V}\times\mathds{W} \times\mathds{S}$.
Following Lemma 11.1 of \cite{chen2015stabilization}, there exist positive smooth functions $\gamma_{1}(\cdot)$ and $\gamma_{2}(\cdot)$ such that
\begin{align*}\|\Delta_1( \epsilon_1,\bar{Z},\mu)\|^2\leq&  \gamma_{1}(\bar{Z})\|\bar{Z}\|^2 + \epsilon_1^2\gamma_{2}(\epsilon_1)
\end{align*}
for $\mu\in \mathds{V}\times \mathds{W}\times \mathds{S}$.  
We can then choose the functions $\Delta_{Z}(\bar{Z} )$ and $\rho(\epsilon_1)$ and the constant $k^*$ as
\begin{align}\label{rho-inquality-1}
\Delta_{Z}(\bar{Z} ) \geq & \ \gamma_{1}(\bar{Z})+1,\notag\\
\rho(\epsilon_1)\geq & \ \max\{\gamma_{2}(\epsilon_1), \hat{\gamma} \left(\epsilon_1\right), 1\},\\
 k^*\geq& \ {(3+\hat{\gamma}^*)}/{(2b)},\notag
\end{align}
such that \eqref{dotV} is satisfied. That is, for all $\mu \in \mathds{V}\times\mathds{W}\times\mathds{S}$, the equilibrium of the closed-loop system at the origin is
globally asymptotically stable. This completes the proof.
\end{proof}

From Theorem \ref{Theorem-1}, we can also use the adaptive method to estimate the $k^*$.
\begin{Corollary}\label{Theorem-2}%
For the system \eqref{Main-sys-com} under Assumptions \ref{ass0}--\ref{ass5-explicit}, there is a sufficiently large enough positive smooth function $\rho(\cdot)$ such that the controller, 
\begin{subequations}\label{adapha}\begin{align}
u &= \alpha_r(\epsilon_1,\epsilon_2,\dots,\epsilon_r,\hat{k},\eta), \label{adESC-1b}\\
\dot{\hat{k}}&=\rho(\epsilon_1)\epsilon_1^2,\label{adESC-1bb}
\end{align}\end{subequations}
 solves Problem \ref{Prob: second-order-Output-regulation} with the functions $\alpha_1( \epsilon_1,\hat{k},\eta)$, $\alpha_2(\epsilon_1,\epsilon_2,\hat{k},\eta,\hat{x}_1)$ and $\alpha_i(\epsilon_1,\dots,\epsilon_i,\hat{k},\eta,\hat{x}_1)$ defined in \eqref{alpha-function}, for $i=3,\dots,r$.
\end{Corollary}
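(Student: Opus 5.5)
We reuse the recursive construction from the proof of Theorem \ref{Theorem-1}, with the constant $k^*$ replaced by the estimate $\hat{k}$. The mismatch $\tilde{k}=\hat{k}-k^*$ is absorbed by a standard quadratic term in the Lyapunov function. Here $k^*$ is any constant satisfying the last inequality in \eqref{rho-inquality-1}, i.e. $k^*\geq (3+\hat{\gamma}^*)/(2b)$. It is used only in the analysis, never in the controller. The functions $\Delta_Z(\bar{Z})$ and $\rho(\epsilon_1)$ are chosen exactly as in \eqref{rho-inquality-1}; in particular $\rho\geq\max\{\gamma_2,\hat{\gamma},1\}$. The candidate is
\begin{align*}
W_r(\bar{Z},\epsilon_1,\dots,\epsilon_r,\tilde{k})=U_r(\bar{Z},\epsilon_1,\dots,\epsilon_r)+b\,\tilde{k}^2,
\end{align*}
where $U_r$ is built as in the proof of Theorem \ref{Theorem-1}, but along the closed loop with $\alpha_i(\cdot,\hat{k},\cdot)$.

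\textbf{Step 1: the $\epsilon_1$-equation.} We have $\alpha_1=-\hat{k}\rho(\epsilon_1)\epsilon_1+\chi(\eta)$. Repeating the estimate \eqref{U1-derivative} therefore gives the term $-2b\hat{k}\rho\epsilon_1^2$ in place of $-2bk^*\rho\epsilon_1^2$. We split it as
\begin{align*}
-2b\hat{k}\rho\epsilon_1^2=-2bk^*\rho\epsilon_1^2-2b\tilde{k}\rho\epsilon_1^2 .
\end{align*}
By \eqref{adESC-1bb}, the extra term cancels exactly against $\frac{d}{dt}(b\tilde{k}^2)=2b\tilde{k}\rho(\epsilon_1)\epsilon_1^2$.

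\textbf{Step 2: the steps $i\geq2$.} The $\alpha_i$ in \eqref{alpha-function} already contain the terms $\frac{\partial\alpha_{i-1}}{\partial k^*}\dot{k}^*$. With $k^*$ replaced by $\hat{k}$, these become $\frac{\partial\alpha_{i-1}}{\partial\hat{k}}\rho(\epsilon_1)\epsilon_1^2$. This quantity is implementable, since it depends only on measured signals. So the cross terms produced by $\dot{\alpha}_{i-1}$ through $\dot{\hat{k}}$ are cancelled exactly, just as the $\eta$- and $\hat{x}_1$-derivatives were. The remaining bookkeeping is then identical to Theorem \ref{Theorem-1}, with the same $\Delta_1$ and the same Young-inequality splittings. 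We obtain
\begin{align*}
\dot{W}_r\leq-\Delta_Z(\bar{Z})\|\bar{Z}\|^2+r\Delta_1-\big(2bk^*\rho(\epsilon_1)-3-\hat{\gamma}^*\hat{\gamma}(\epsilon_1)\big)\epsilon_1^2-\sum\nolimits_{j=2}^r\epsilon_j^2 .
\end{align*}
Using the bound on $\Delta_1$ via Lemma 11.1 of \cite{chen2015stabilization} and the choice \eqref{rho-inquality-1}, this yields
\begin{align*}
\dot{W}_r\leq-\|\bar{Z}\|^2-\sum\nolimits_{j=1}^r\epsilon_j^2\leq0 .
\end{align*}

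\textbf{Step 3: convergence.} Because $W_r$ is only nonincreasing in $(\bar{Z},\epsilon,\tilde{k})$, we cannot conclude asymptotic stability of $\tilde{k}$. Instead we argue in three parts.
\begin{itemize}
\item Boundedness of $W_r$ gives boundedness of $\bar{Z}$, all $\epsilon_i$, and $\hat{k}$. Since $\mu$ ranges over a compact set and $\eta=\bar{\eta}+\bm{\eta}^\star+Nb^{-1}\epsilon_1$, the signals $\eta$ and $\hat{x}$ are bounded, so $u$ is bounded and solutions exist for all $t\geq0$.
\item Integrating $\dot{W}_r$ gives $\bar{Z},\epsilon_i\in L_2$. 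Their derivatives are bounded, being continuous functions of bounded signals.
\item By Barbalat's lemma, $\bar{Z}(t)\to0$ and $\epsilon_i(t)\to0$. In particular $e=\epsilon_1\to0$.
\end{itemize}
Moreover $\hat{k}$ is monotone and bounded, hence convergent. This solves Problem \ref{Prob: second-order-Output-regulation}.

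\textbf{Main obstacle.} The delicate point is Step 2. We must check that the partial derivatives $\partial\alpha_{i-1}/\partial\hat{k}$ are well defined and smooth. This requires smoothness of $\chi$, which the proof of Theorem \ref{Theorem-1} uses as only $C^1$, so sufficiently smooth $\rho$ and $\chi$ are needed. We must also check that no uncancelled term involving $\tilde{k}$ appears for $i\geq2$. Such a term cannot arise, because $\alpha_i$ contains no unknown quantity and $\hat{k}$ is an available state; only the $\epsilon_1$-step involves $k^*$.
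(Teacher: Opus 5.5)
Your proposal is correct and follows the paper's own sketched argument. Both use the same Lyapunov function $U_r+b(\hat{k}-k^*)^2$, cancel the $-2b\tilde{k}\rho(\epsilon_1)\epsilon_1^2$ term exactly through the update law \eqref{adESC-1bb}, reuse the recursive bounds of Theorem~\ref{Theorem-1}, and close with Barbalat; your monotone-and-bounded argument for convergence of $\hat{k}$ is slightly tighter than the paper's, and the one bookkeeping point you inherit from the paper, namely that \eqref{rho-inquality-1} must also absorb the factor $r$ in $r\Delta_1$, is harmless because $k^*$ and $\Delta_Z$ appear only in the analysis and can be taken as large as needed.
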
%
\begin{rem} The proof of Corollary \ref{Theorem-2} can easily proceed with the Lyapunov function 
$$V_r(\bar{Z},\epsilon_1,\dots,\epsilon_r, \hat{k}-k^*)=U_r(\bar{Z}, \epsilon_1,\dots, \epsilon_r)+b(\hat{k}-k^*)^2.$$
Therefore, the proof is omitted for the sake of brevity.
By differentiating $V_r(t)$
and Theorem~\ref{Theorem-1}, we can show that $\dot{V}_r(t)\le 0$.
Thus $V_r(t)$ is nonincreasing and bounded below, and therefore
converges as $t\to\infty$. 
In the meantime, the updated law in \eqref{adESC-1bb} does not generate an unbounded
high-gain. 
Since $\dot{V}_r(t)$ is uniformly continuous
under the smooth closed-loop dynamics, Barbalat’s Lemma yields
$\dot{V}_r(t)\to 0$, implying that the driving term of $\dot{\hat{k}}$
vanishes asymptotically. Consequently, $\hat{k}(t)$ is uniformly bounded
and converges to a finite limit.
Notably, the control law \eqref{adapha} differs from \cite{liu2009parameter} and \cite{tomei2023adaptive} by utilizing a non-adaptive design framework.
\end{rem}

\begin{rem} 
 Following Lemma~3 in \cite{wang2023nonparametric}, the existence of the nonlinear mapping
$\chi$ in \eqref{IM-01} hinges on solving the time-varying linear equation
\[
\Theta(\eta)\,\check{a}(\eta) + \operatorname{col}(\eta_{n+1},\dots,\eta_{2n}) = 0.
\]
Since $\Theta(\eta(t))$ may be singular at isolated time instants, we introduce a globally
defined smooth approximate inverse $O(\Theta):\mathbb{R}^{n\times n}\to\mathbb{R}^{n\times n}$ and set
\[
\check{a}(\eta) \triangleq -\,O(\Theta(\eta))\,\operatorname{col}(\eta_{n+1},\dots,\eta_{2n}),
\]
with
\begin{equation}\label{eq-Theta}
  O(\Theta)
  = \frac{\det(\Theta)}{\det(\Theta)^2 + \Psi\!\big(1+\det(\Theta)^2 - \varepsilon^2\big)}\,\operatorname{adj}(\Theta),
\end{equation}
where $\varepsilon>0$ is a design constant, $\operatorname{adj}(\cdot)$ denotes the adjugate matrix, and
\[
\Psi(\varsigma)=\frac{\kappa(\varsigma)}{\kappa(\varsigma)+\kappa(1-\varsigma)},\qquad
\kappa(s)=\begin{cases} e^{-1/s}, & s>0,\\[2pt] 0, & s\le 0.\end{cases}
\]
Note that $O(\Theta)=\Theta^{-1}$ when $\Theta$ is nonsingular and $\det(\Theta)^2\gg\varepsilon^2$,
whereas $O(\Theta)\to 0$ smoothly as $\det(\Theta)\to 0$, which ensures global smoothness.
Finally, define
\[
\chi(\eta)=\Gamma\,\Xi \big(\check{a}(\eta)\big)\,\operatorname{col}(\eta_{1},\dots,\eta_{n}).
\]
\end{rem}

\section{Application to Duffing's system}\label{numerexam}
%\subsection{Example 1: Application to Duffing's system}
\begin{figure}%[htbp]
\centering
\epsfig{figure=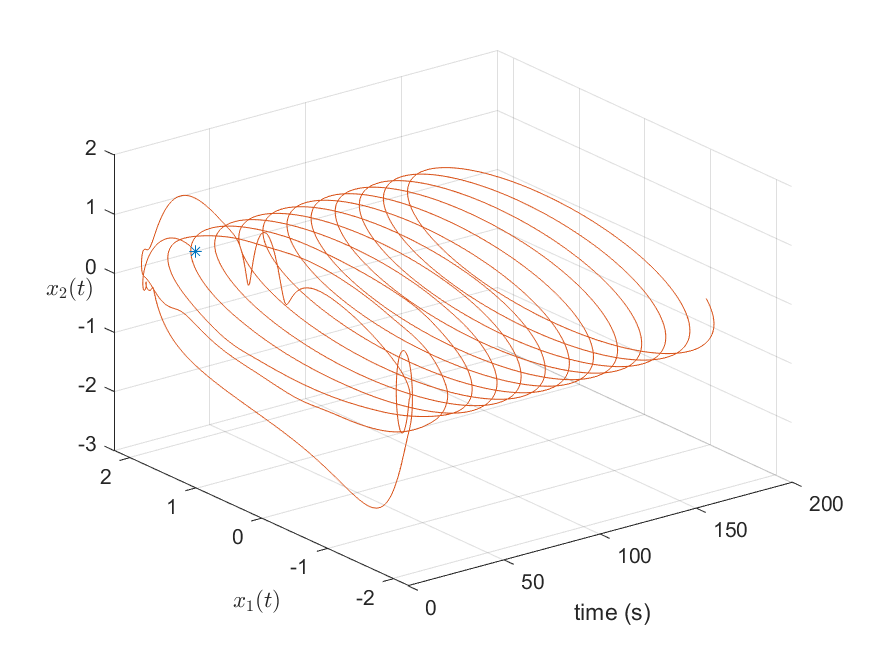,width=\textwidth}
\caption{State trajectory of the Duffing system (*: initial point)}.\label{fig_duf_1}
\end{figure}
\begin{figure}%[htbp]
\centering
\epsfig{figure=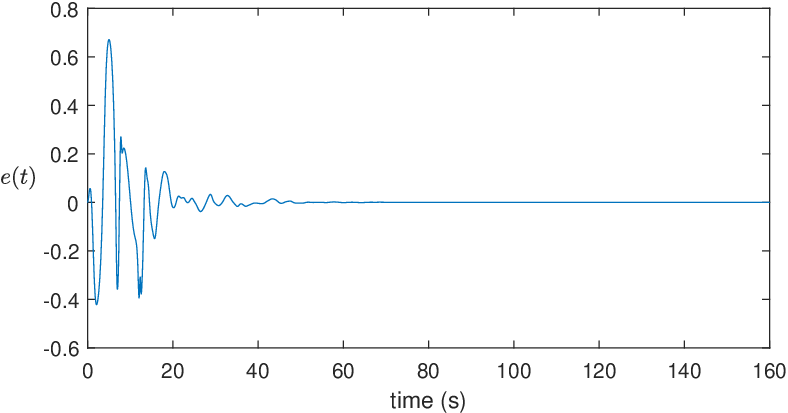,width=\textwidth}

\caption{ Time profile of the tracking error $e=y-h(v,w)$ for the Duffing system with $h(v,w)=v_1$.}\label{fig_duf_2}
\end{figure}
%\end{figure}
\begin{figure}%[htbp]
\centering
\epsfig{figure=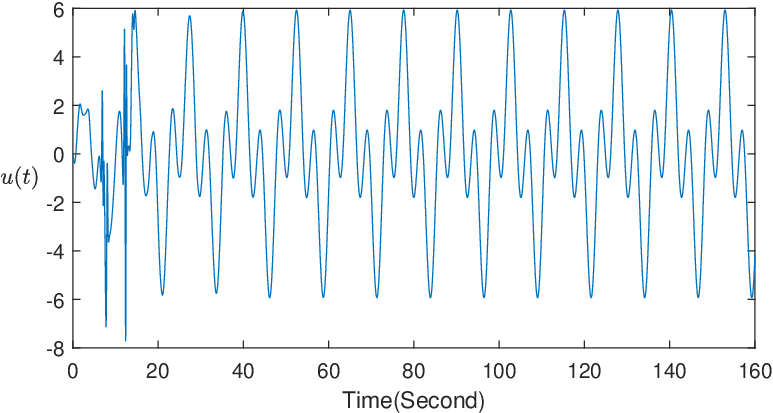,width=\textwidth}

\caption{Time profile of the control input for the Duffing system.}\label{fig_duf_input}\quad
\end{figure}
\begin{figure}%[htbp]
\centering
\epsfig{figure=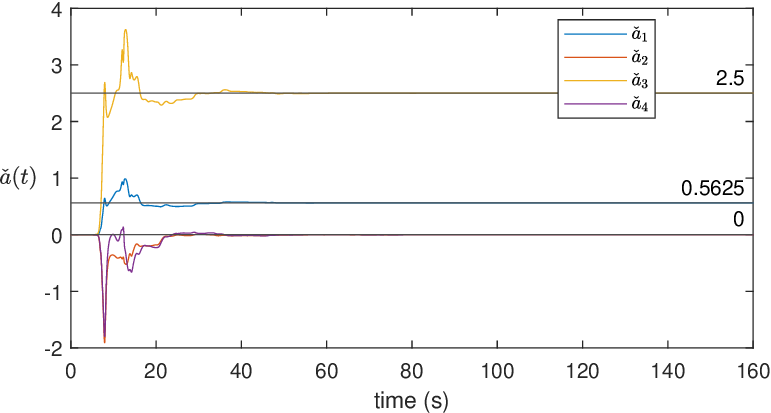,width=\textwidth}

\caption{Estimated parameters of the steady-state dynamics for the Duffing system.}\label{fig_duf_4}\quad
\end{figure}
\begin{figure}[htbp]
\centering
\epsfig{figure=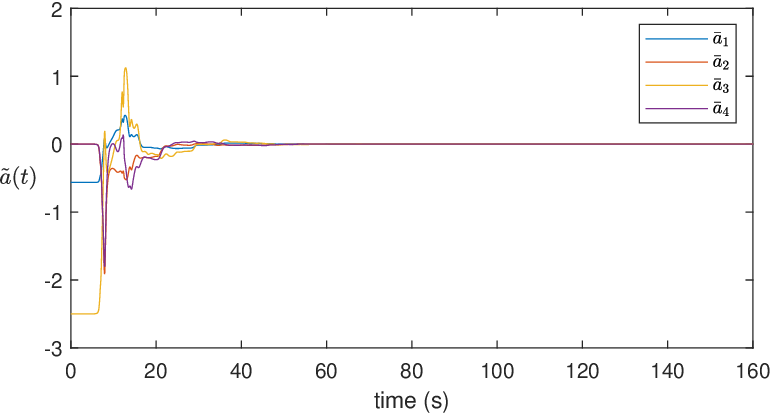,width=\textwidth}

\caption{Parameter estimation error of the steady-state dynamics for the Duffing system.}\label{fig_duf_3}
\end{figure}
Consider the nonlinear system modelled by a controlled Duffing system \cite{liu2008asymptotic}:
\begin{subequations}\label{ex:duffin}
    \begin{align}
    \dot{x}_1=&\ x_2,  \\
    \dot{x}_2=&-c_1x_1-c_2x_1^3-c_3 x_2+u+d(t),
    \end{align}
\end{subequations}
where $\col(x_1,x_2)\in \mathds{R}^2$ is the state; $c_1=1.5$, $c_2=-2$, and $c_3=0.5$ are the coefficients; and the external disturbance is $d(t)=A\cos(\sigma t + \psi)$ with unknown amplitude, frequency, and phase, which can be generated by an uncertain exosystem in the form \eqref{eqn: exosystem system} with 
\begin{align}\label{ex:exosystem}
S(\sigma)=\left[\begin{matrix}0 &\sigma \\ -\sigma & 0\end{matrix}\right],\quad v=\left[\begin{matrix}v_1\\ v_2\end{matrix}\right],\quad e = y - h(v,w), 
 \end{align}
where $h(v,w)=v_1$, $\sigma\in \mathds{S}=\{\sigma\in \mathds{R}: \sigma \in [0.1,1]\}$ and $ \mathds{V}=\{v\in \mathds{R}^2: \|v\|\leq 2.1\}$ with $\sigma$ being the unknown constant $0.5$. 

Under the stated assumptions, \cite{liu2009parameter} showed that there exists a solution of $\bm{\hat{x}}_2(v,\sigma,w)$, polynomial in $v$, satisfying
\begin{align*}
\frac{d^{4}\bm{\hat{x}}_2}{dt^{4}}+a_1 \bm{\hat{x}}_2+a_2 \frac{d\bm{\hat{x}}_2}{dt}+a_3 \frac{d^{2}\bm{\hat{x}}_2}{dt^{2}}+a_4\frac{d^{3}\bm{\hat{x}}_2}{dt^{3}}=0,
\end{align*}
with unknown true value vector $$a\equiv\underbrace{\col(9\sigma^4, 0,10 \sigma^2,0)}_{\col(a_1,a_2,a_3,a_4)}$$ in \eqref{stagerator}. For the control law \eqref{adESC-1b}, we can choose $\rho(e)=2+2e^2$ based on \eqref{rho-inquality-1}  to make the the inequality \eqref{U1-derivative-r} to be negative definite, $\lambda_1=4$ and $\lambda_2=4$ are chosen to make the matrix $A=A_c-\lambda C_c$ in \eqref{input-filter} to be Hurwitz, $m_1=1$, $m_2=5.1503$, $m_3=13.301$,
$m_4=22.2016$, $m_5=25.7518$, $m_6=21.6013$, $m_7=12.8005$, and $m_8=5.2001$ are chosen to make $M$ in \eqref{explicit-mas1} defined in \eqref{MNINter} to be Hurwitz. The simulation starts with the initial conditions 
$x(0)=\col(1,1)$, $v(0)=\col(1,2)$, $\hat{x}(0)=\textbf{0}_2$, $\eta(0)=\textbf{0}_8$, and $\hat{k}(0)=0$. %Cite the explicit theoretical results used for selection of the design parameters for each case study.
The control law stabilizes the system, with the tracking error converging to nearly zero within 50 seconds 
(Figs.\ 
\ref{fig_duf_1}--\ref{fig_duf_2}), and the control signal converging to a periodic signal (Fig.\ \ref{fig_duf_input}).
The estimated parameters of the steady-state dynamics for the closed-loop Duffing system converge to zero within the same time period, which further implies the the estimated parameters converge to the true values  (Fig.\ \ref{fig_duf_4}--\ref{fig_duf_3}).
%Finally, Fig. \ref{figessec} shows the trajectory of suspension deflection without or with active suspension. The control signal,  $u(t)$, is active for $t\geq 40$. 

%\subsection{Example 3: Application in Continuous stirred tank reactor (CSTR)}\label{numerexam2}

%Consider a continuous two-state bioreactor mode borrowed from \cite{inguva2023dynamics} described by the following equations:

%\begin{align}\label{cstr-non}
%   \dot{x}_1=&-Dx_1+\frac{\mu_{\max }x_2}{K_s+x_2 } x_1,\nonumber\\
 %   \dot{x}_2=&-Dx_2 -\frac{\mu_{\max }x_2}{(K_s+x_2)(v_{\min}+ax_2)^{\rho} }x_1+Du,\\
 %   e=&x_2-y_r,\nonumber
%\end{align}

\section{Conclusion}\label{conlu}
This article proposes a nonadaptive nonlinear robust output regulation approach for general nonlinear output feedback systems with error output. 
The proposed nonadaptive framework transforms the robust output regulation problem into a robust non-adaptive stabilization method that is effective for systems with Input-to-State Stable dynamics. 
%
%The integration of a nonparametric learning framework ensures the viability of the nonlinear mapping and eliminates the need for specific Lyapunov function construction {\myr and the commonly employed parameterized assumption on the nonlinear
%system}. 
%
The approach is illustrated in one numerical example, involving a controlled Duffing system, showing convergence of the parameter estimation error and of the tracking error to zero.
%
% {\myr Future research will be conducted on applying the proposed nonparametric learning framework to control the impinging jet mixer for improving the productivity of the solid lipid nanoparticles.
% }

\bibliographystyle{siamplain}
\bibliography{myref}
\end{sloppypar}
\end{document}